\documentclass[journal,twocolumn,10pt,letter]{IEEEtran}

\usepackage{graphics}
\usepackage{colortbl}
\usepackage{multicol}
\usepackage{lipsum}
\usepackage{color}
\usepackage[cmex10]{amsmath}
\usepackage{amsthm}
\usepackage{amssymb}
\usepackage{mathrsfs}
\usepackage{mathtools}
\usepackage{amsbsy}
\usepackage[colorlinks=true,bookmarks=false,citecolor=blue,urlcolor=blue]{hyperref}
\usepackage{xcolor}
\usepackage{mathrsfs}
\usepackage{setspace}
\usepackage{float}
\usepackage{graphicx}
\usepackage{pstool}
\usepackage{lettrine}
\usepackage{newclude}
\usepackage[normalem]{ulem}
\usepackage{latexsym}
\usepackage{algpseudocode}
\usepackage{algorithm,algpseudocode}
\usepackage{algorithmicx}

\usepackage{multirow}
\usepackage{amsmath}
\usepackage{amsmath,bm}
\usepackage{wasysym}
\usepackage{mathrsfs}
\usepackage{amsmath,cite,amsfonts,amssymb,color}
\allowdisplaybreaks
%%%%%%%%%%%%%%%%%%%%%%%%%%%%%%%%%%%%%%%%%%%%%%%%%%%%%%%%%
\ifCLASSOPTIONcompsoc
\usepackage[caption=false,font=normalsize,labelfon
t=sf,textfont=sf]{subfig}
\else
\usepackage[caption=false,font=footnotesize]{subfig}
\fi
%%%%%%%%%%%%%%%%%%%%%%%%%%%%%%%%%%%%%%%%%%%%%%%%%%%%%%%%%

\allowdisplaybreaks

\newtheorem{theorem}{Theorem}

\newtheorem{proposition}{Proposition}

\graphicspath{{figures/}}
\allowdisplaybreaks

\hyphenation{op-tical net-works semi-conduc-tor}

\newcounter{mytempeqncnt}
\newcounter{mytempeqncnt2}

\usepackage{balance}

\begin{document}

% \title{Optical Fronthaul Allocations Analysis in Cell-Free Massive MIMO Network}
\title{Designing Cost- and Energy-Efficient Cell-Free Massive MIMO Network with Fiber and FSO Fronthaul Links}
% Fiber-FSO Fronthaul Allocation in Cell-Free Massive MIMO Wireless Network

\author{\IEEEauthorblockN{Pouya Agheli\thanks{P. Agheli, M. J. Emadi, and H. Beyranvand are with the Department of Electrical Engineering, Amirkabir University of Technology (Tehran Polytechnic), Tehran, Iran (E-mails: \{pouya.agheli, mj.emadi, beyranvand\}@aut.ac.ir).},
Mohammad Javad Emadi, and Hamzeh Beyranvand
}} 

%\IEEEauthorblockA{\IEEEauthorrefmark{2}Electrical and Computer Engineering Department, TU Dresden, Dresden, Germany}}
% make the title area
\maketitle
%% ABSTRACT --------------------------------------------
\begin{abstract}

The emerging cell-free massive multiple-input multiple-output (CF-mMIMO) is a promising scheme to tackle the capacity crunch in wireless networks. Designing the optimal fronthaul network in the CF-mMIMIO is of utmost importance to deploy a cost- and energy-efficient network. In this paper, we present a framework to optimally design the fronthaul network of CF-mMIMO utilizing optical fiber and free space optical (FSO) technologies. We study an uplink data transmission of the CF-mMIMO network wherein each of the distributed access points (APs) is connected to a central processing unit (CPU) through a capacity-limited fronthaul, which could be the optical fiber or FSO. Herein, we have derived achievable rates and studied the network's energy efficiency in the presence of power consumption models at the APs and fronthaul links. Although an optical fiber link has a larger capacity, it consumes less power and has a higher deployment cost than that of an FSO link. For a given total number of APs, the optimal number of optical fiber and FSO links and the optimal capacity coefficient for the optical fibers are derived to maximize the system's performance. Finally, the network's performance is investigated through numerical results to highlight the effects of different types of optical fronthaul links.
\end{abstract}
\begin{IEEEkeywords}
% Cell-free massive multiple-input multiple-output, uplink, capacity-limited optical fronthaul,  fiber link, free space optical link, compression, achievable rate, cost efficiency, energy efficiency.
Cell-free massive multiple-input multiple-output, capacity-limited optical fronthaul, optical fiber link, free space optical link, achievable uplink rate, cost efficiency, and energy efficiency
\end{IEEEkeywords}
%% -----------------------------------------------------

\IEEEpeerreviewmaketitle

%% 1. INTRODUCTION -------------------------------------
\section{Introduction}
One of the main aspects of the emerging generations of wireless networks is to support the growing number of mobile users with high data rate demands. The massive multiple-input multiple-output (mMIMO) provides remarkable improvements in spectral- and energy-efficiency, and more interestingly, it accompanies near-optimal linear processing due to the weak law of large numbers \cite{marzetta2010noncooperative}. Besides, thanks to the uplink and downlink channel reciprocity in the time-division duplex (TDD) transmission scheme, it is only required to know the channel state information (CSI) at the base station, and the user only needs to know the statistical average of the effective channel. Thus, the pilot transmission overhead in the downlink is relaxed \cite{ marzetta2016fundamentals}. Due to the mMIMO scheme's importance, various variations of the scheme are also presented in the literature \cite{marzetta2016fundamentals,khormuji2015generalized,ngo2017cell,kabiri2017optimal}. Especially, cell-free massive MIMO (CF-mMIMO) is also introduced to not only capture the gain of mMIMO scheme but also serve users with almost the same quality \cite{ngo2017cell}.

% In cell-free massive MIMO network, all APs concurrently serve all users in the same frequency band without cell boundaries. So, it provides higher data rates and wider coverage area thanks to the shorter access links and lower shadow fading effects compared to the cellular network. To talk in details, the $95\%$-likely network throughput of the cell-free massive MIMO downlink is about seven times higher than that of the small-cell downlink without shadow fading correlation \cite{ngo2017cell}. The power control mechanisms of data transmission and pilot transmission in the cell-free massive MIMO system are represented in \cite{ngo2017cell} and \cite{mai2018pilot}, respectively. Meanwhile, user-centricity is an approach introduced in cellular and cell-free networks wherein only a group of users, based on their channel conditions, are served by an AP at each coherent interval \cite{buzzi2017cell}. The user-centric massive MIMO network offers higher per-user data rates in comparison to the CF-mMIMO approach with less backhaul overhead demand \cite{buzzi2017user,buzzi2019user}. The network performance analysis depends on the deployment mechanism of fronthaul and backhaul links. \cite{ngo2017total} and \cite{yang2018energy} discuss energy efficiency and total power consumption models at APs and backhaul links in cell-free massive MIMO network, where it is assumed that backhaul links are ideal; in contrast, spectral/energy efficiency expressions are derived for limited-backhaul cell-free massive MIMO networks in \cite{masoumi2019performance,bashar2019max}.
In the cell-free massive MIMO network, all access points (APs) concurrently serve all user equipments (UEs) in the same time-frequency resource. Thus, it provides higher data rates and wider coverage area since there are closer APs to the served UEs. For instance, the  $95\%$-likely network throughput of the cell-free massive MIMO downlink is about seven times higher than that of the small-cell without shadow fading correlation \cite{ngo2017cell}. The CF-mMIMO networks are analysed in the literature from different perspectives. The power control mechanisms for data and pilot transmissions are respectively considered in \cite{ngo2017cell} and \cite{mai2018pilot}. The user-centric approach is also introduced and analysed in \cite{buzzi2017cell}. The user-centric massive MIMO network offers higher per-user data rates in comparison to the CF-mMIMO approach with less backhaul overhead demand \cite{buzzi2017user,buzzi2019user}. Moreover, \cite{ngo2017total} and \cite{yang2018energy} have studied energy efficiency and total power consumption models at APs and backhaul links in CF-mMIMO network, where it is assumed that backhaul links are \emph{ideal}. In contrast, spectral- and energy-efficiency expressions are also analysed for \emph{limited-capacity} fronthaul CF-mMIMO networks in \cite{masoumi2019performance,bashar2019max}, and the quantization effects on the performance are studied.

In practice, the optical fiber is primarily used for fronthaul and backhaul links due to its high data capacity and low path-loss, which comes at the disadvantage of high deployment cost. In contrast, although the free space optical (FSO) technology still offers large enough data capacity with much lower deployment cost and fast setup time, it has drawbacks such as pointing error, atmospheric-dependent channel quality, and needing line-of-sight (LOS) connection  \cite{hassan2017statistical,khalighi2014survey,kaushal2016optical}. To address the outage problem of the FSO links in adverse atmospheric conditions, the combined radio frequency (RF) and FSO solution is employed for fronthaul and backhaul links, namely hybrid RF/FSO and relay-based cooperation \cite{douik2016hybrid,touati2016effects,chen2016multiuser,usman2014practical,zhang2009soft,jamali2016link,najafi2017optimal}. Moreover, \cite{ahmed2018c} has studied the joint deployment of RF and FSO fronthaul links in the uplink of a cloud-radio access network (C-RAN), and \cite{najafi2017c} presents a C-RAN network with RF multiple-access links and hybrid RF/FSO fronthaul links wherein RF-based fronthaul and multiple-access links exploit the same frequency band with the optimized time-division mechanism. Besides, CF-mMIMO network for an indoor visible light communications (VLC) is investigated in \cite{kizilirmak2017centralized,beysens2018densevlc} without considering the limited-capacity fronthaul/backhaul links effects. Recently, the uplink cell-free and user-centric mMIMO networks with radio/FSO fronthaul and multi-core fiber (MCF) backhaul links have been investigated in \cite{agheli2020performance}, where the optimal power allocation and adaptive fronthaul assignment have been proposed. On the other hand, meeting the increase of the wireless networks' densification and required bandwidth, the third generation partnership project (3GPP) has evolved an integrated access and backhaul (IAB) architecture at millimeter waves (mmWaves) for the fifth generation of cellular networks (5G). To this end, the same infrastructure and resources are employed for both the access and backhaul links \cite{saha2018bandwidth,polese2018distributed}. In this case, the power and spectrum allocation, distributed path selection, and bandwidth partitioning in IAB-enabled small-cell networks have been studied in \cite{polese2020integrated,guerrero2020integrated,saha2019millimeter}.

In this paper, we analyse a cost- and energy-efficient cell-free massive MIMO network with capacity-limited optical fronthaul links connecting distributed APs to the central processing unit (CPU). It is assumed that each of the fronthaul links is deployed based on optical fiber or free space optical technology according to optimal fronthaul allocation mechanism. The main contributions of the paper are summarized as follows.
\begin{itemize}
    \item Closed-form uplink achievable data rates have been derived by employing maximum-ratio combining (MRC) and use-and-then-forget (UatF) techniques. 
    \item To optimally design the fronthaul network, we derive closed-form spectral- and energy efficiency expressions of the network by considering the fiber and the FSO deployment cost functions, power consumption models at APs, and consumed power for data transmission over the fronthaul links.
    \item We derive closed-form expressions for the optimal capacity coefficient of the fibers and the optimal number of fiber and FSO fronthaul links for a given number of APs by maximizing the network's energy efficiency.
    \item The network's performance is compared for different deployment sets of the fibers' capacity coefficient and the number of fiber-based fronthauls from different viewpoints through extensive numerical results.
\end{itemize}

\textit{Organization}: In Section \ref{Sec:Sec2}, cell-free massive MIMO network with optical fronthaul links is introduced. Uplink achievable data rates and energy efficiency are derived in Section \ref{Sec:Sec3}. Section \ref{Sec:Sec4} presents the fronthaul link allocation optimization problem, and numerical results and discussions are represented in Section \ref{Sec:Sec5}. Finally, the paper is concluded in Section \ref{Sec:Sec6}.

\textit{Notation}: $x \! \in \! \mathbb{C}^{n \times 1}$ denotes a vector in an $n$-dimensional complex space, $\mathbb{E}\{\cdot\}$ is the expectation operator, and $[\,\cdot\,]^T$ stands for the transpose. Also, $y \!\sim\! \mathcal{N}(m,\sigma^2)$ and $z \!\sim\! \mathcal{CN}(m,\sigma^2)$ respectively denote real-valued and complex symmetric Gaussian random variables (RVs) with mean $m$ and variance $\sigma^2$.

%% -----------------------------------------------------

%% 2. System Model -------------------------------------
\section{System Model} \label{Sec:Sec2}
We assume a wireless cell-free massive MIMO (CF-mMIMO) network to simultaneously serve ${K}$ UEs by $M$ distributed APs in the same time-frequency resources. The system model is depicted in Fig.~\ref{Fig:Fig.1}. All UEs and APs are assumed to be single-antenna \footnote{It can be extended to a system with multi-antenna APs under more complex mathematical computations.} and distributed over a large area. Each AP is connected to a CPU via a fronthaul link. Two types of fronthaul link are considered, FSO and optical fiber (OF), where $M_{\normalfont\text{FSO}}$ of the APs are connected to the CPU via the FSO links and the rest, i.e., $M_{\normalfont\text{OF}}=M-M_{\normalfont\text{FSO}}$, are connected via the optical fiber links. %number of APs and $M_{\normalfont\text{OF}}$ number of APs, which $M_{\normalfont\text{OF}}$ equals to $M-M_{\normalfont\text{FSO}}$, are connected with FSO- and OF-based fronthaul links to the CPU, respectively. 
Moreover, it is assumed that the information capacity for OF ($C_{\normalfont\text{OF}}$) is $N$ times of that of FSO ($C_{\normalfont\text{FSO}}$), i.e., $C_{\normalfont\text{OF}}=N C_{\normalfont\text{FSO}}$ for $N \geq 1$, to consider a higher capacity for the fiber links at the cost of more deployment costs than that of the FSO ones. 
%of OF-based fronthaul links are much more than FSO-based fronthaul links. 
Besides, we assume that the CPU has full-CSI, and we only focus on the uplink data transmission throughout the paper to analyse the performance of the system and the effects of fronthaul types.

 %Data received at each AP is forwarded to CPU via either FSO- or OF-based fronthaul link, based on optimal fronthaul allocation.
%%======> Figure No.1 <======%%
\begin{figure}[!t]
\centering
\pstool[scale=0.58]{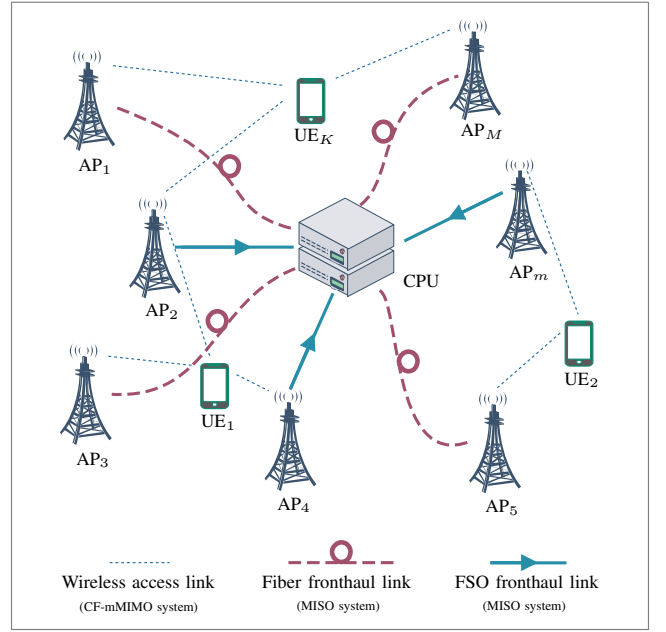}{
\psfrag{UE1}{\hspace{0.01cm}\scriptsize $\text{UE}_1$}
\psfrag{UE2}{\hspace{0.01cm}\scriptsize $\text{UE}_2$}
\psfrag{UEK}{\hspace{-0.12cm} \scriptsize $\text{UE}_K$}
\psfrag{AP1}{\hspace{0.2cm}\scriptsize $\text{AP}_1$}
\psfrag{AP2}{\hspace{0.1cm}\scriptsize $\text{AP}_2$}
\psfrag{AP3}{\hspace{0.1cm}\scriptsize $\text{AP}_3$}
\psfrag{AP4}{\hspace{0.1cm}\scriptsize $\text{AP}_4$}
\psfrag{AP5}{\hspace{0.1cm}\scriptsize $\text{AP}_5$}
\psfrag{AP6}{\hspace{0.1cm}\scriptsize $\text{AP}_6$}
\psfrag{APm}{\hspace{0.1cm}\scriptsize $\text{AP}_m$}
\psfrag{APM}{\hspace{0.1cm}\scriptsize $\text{AP}_M$}
\psfrag{CPU}{\hspace{0.1cm}\scriptsize CPU}
\psfrag{WirelessAccessLink}{\hspace{-0.05cm}\scriptsize Wireless access link}
\psfrag{OpticalFronthaulLink}{\hspace{-0.1cm} \scriptsize Fiber fronthaul link}
\psfrag{FSOFronthaulLink}{\hspace{-0.23cm} \scriptsize FSO fronthaul link}
\psfrag{(CFmMIMO)}{\hspace{-0.26cm} \tiny(CF-mMIMO system) }
\psfrag{(MISO1)}{\hspace{-0.18cm}\tiny(MISO system)}
\psfrag{(MISO2)}{\hspace{-0.18cm}\tiny(MISO system)}
}
\caption{CF-mMIMO wireless network with fiber and FSO fronthaul links.}
\label{Fig:Fig.1}
\end{figure}

%%%%%%%%%%%%%%%%%%%%%%%%%%%%%%
\subsection{Channel Model}   
The wireless channel between the $k$th UE and the $m$th AP follows the flat-fading model and is given by 
%We consider the access channel model the same as in \cite{ngo2017cell,ngo2015cell,masoumi2019performance}.  It is assumed that access wireless channels effected by both small-scale fading and large-scale fading. We let $g_{mk}$ denotes the wireless channel gain between $k$th UE and $m$th AP which is given by
\begin{equation} \label{Eq:Eq1}
g_{mk}=\sqrt{{\beta}_{mk}} h_{mk},
\end{equation}
where ${\beta}_{mk}$ and $h_{mk}$ respectively represent large- and small-scale fadings \cite{ngo2017cell,agheli2020performance,masoumi2019performance}. The small-scale fading coefficients $h_{mk}$ for $m=1, 2, ..., M$ and $k=1, 2, ..., K$,  are independent and identically distributed (i.i.d.) zero-mean complex Gaussian random variables with unit variance, i.e. $h_{mk}\sim \mathcal{CN}(0,1)$. 
%Small-scale fading is assumed to be constant for each independent interval. 
%, will be discussed later.
%%%%%%%%%%%%%%%%%%%%%%%%
\subsection{Uplink Data Transmission}
In the uplink data transmission, all $K$ UEs simultaneously transmit their data to APs on the same time-frequency resource element. Thus, the $m$th AP receives the following superimposed signal
\begin{equation} \label{Eq:Eq7}
y_{m}=\sqrt{\rho_u}\sum\limits_{k=1}^{K} \sqrt{\eta_k}g_{mk}q_k+\omega_{m},
\end{equation}
where $\rho_u$ represents the maximum transmit power of each user, $q_k\sim \mathcal{CN}(0,1)$ is the data symbol of the $k$th user with power control coefficient $\eta_k \in [0,1]$, and $\omega_{m}\sim \mathcal{CN}(0,\delta^2_m)$ denotes additive Gaussian noise at $m$th AP.
% Assume $k$th user transmits its symbol $q_k\sim \mathcal{CN}(0,1)$, which is weighted by power control coefficient $\eta_k \in [0,1]$. Then, the received signal at $m$th AP is given by
% \begin{equation} \label{Eq:Eq7}
% y_{m}=\sqrt{\rho_u}\sum\limits_{k=1}^{K} \sqrt{\eta_k}g_{mk}q_k+\omega_{m},
% \end{equation}
% where $\rho_u$ represents the maximum transmit power of each user, and, $\omega_{m}\sim \mathcal{CN}(0,\delta^2_m)$ denotes additive Gaussian noise at $m$th AP.
%%%%%%%%%%%%%%%%%%%%%%%%%%%%%%%%%%%
\subsection{Rate Distortion Theory}
To perfectly represent an arbitrary continuous random variable, one needs an \textit{infinite} number of bits. Thus, describing it with a \textit{finite} number of bits, results in \textit{distortion}. This concept is analysed in the well-known rate distortion theory \cite{cover2012elements}, which is also called the vector quantization with \textit{large enough} dimension. Let us assume an i.i.d. random source $X \sim P_X(x)$ with zero-mean and bounded variance $\mathbb{E}\lbrace |X|^2\rbrace  =\sigma^2$. 
Thus, the rate-distortion problem is to represent the $n$-length i.i.d. source $\boldsymbol{X}$ with $nR$ bits such that average distortion becomes less than $D$, i.e. $\mathbb{E}\big\lbrace d(\boldsymbol{X},\boldsymbol{\hat{X}})\big\rbrace\leq D$, where $\boldsymbol{\hat{X}}(i)$ for $i \in \big\lbrace 1, 2, ..., 2^{nR} \big\rbrace$ represents the quantized versions of the random source. 
% So, each AP can perfectly transmit compressed symbol $i$ through the ideal fronthaul link with capacity $C$ [bps/Hz] \cite{masoumi2019performance}. 
% and let $\hat{X}$ denotes variable representing $X$. To evaluate the ``goodness" of a source representation, we define a distortion measure which measures the distance between the random variable and its representation, i.e., $d(X,\hat{X})$. For square-error distortion, $d(X,\hat{X})=(X-\hat{X})^2$ is the distance measure between source and its representation \cite{cover2012elements}. The problem is to quantize $n$-length source variable $\boldsymbol{X}$ to $\boldsymbol{\hat{X}}(i)$ which $\mathbb{E}\big\lbrace d(\boldsymbol{X},\boldsymbol{\hat{X}})\big\rbrace\leq D$, for large enough $n$, where $i \in \big\lbrace  1, 2, ..., 2^{n C} \big\rbrace  $. So, each AP can perfectly transmit compressed symbol $i$ through the ideal fronthaul link with capacity $C$ [bps/Hz] \cite{masoumi2019performance}. 
Thus, the rate distortion function for the $X$ with square-error distortion and large enough $n$ is defined as follows \cite{cover2012elements}
\begin{equation} \label{Eq:Eq8}
R^*(D)={\underset{P_X(\hat{x}\mid x):\:   \mathbb{E}(|\hat{x}-x|^2) \leq D}{\text{min}}} I(\hat{X};X) ~~~\text{[bits per symbol]}.
\end{equation}
By use of a test channel $\hat{X}=X+Z$, where $Z \sim \mathcal{CN}(0,D) $ denotes the quantization noise which is independent of  $X$, the rate distortion function $R(D) = \log_2\big(1+\frac{\sigma^2}{D}\big)$ is achieved \cite[Chapter 10]{cover2012elements} and \cite{masoumi2019performance}\footnote{There is another variation of the test channel $X=\hat{X}+Z$ in which $Z \sim \mathcal{CN}(0,D) $ represents the quantization noise which is independent of  $\hat{X}$. In this case, the rate distortion function is obtained as $R(D) = \log_2\big({\sigma^2}/{D}\big)$ which is smaller than $\log_2\big(1+{\sigma^2}/{D}\big)$. In the paper, based on the mathematical simplicity, we have used one of the two mentioned test channels.}.

On the other hand, we then transmit the $nR$-bit quantization index over a fronthaul link with channel capacity $C$ [bits per channel use] without error. Thus, to minimize the quantization noise variance of $D$, we must use the maximum capacity of the fronthaul channel. Therefore,  $\log_2\big(1+\frac{\sigma^2}{D^*}\big)=C$, and $D^*=\frac{\sigma^2}{2^C-1}$.

% \begin{equation} \label{Eq:Eq9}
% \begin{aligned}
% C  ={\underset{P_X(\hat{x}\mid x)}{\text{max}}} I(\hat{X};X) & = {\underset{P_X(\hat{x} \mid x)}{\text{max}}} \Big (h(\hat{X})-h(\hat{X}\mid X)\Big)\\
% & = {\underset{P_X(\hat{x} \mid x)}{\text{max}}} \Big (h(\hat{X})-h(Z)\Big)\\
% & \leq \log_2\big(1+\frac{\sigma^2}{D}\big).
% \end{aligned}
% \end{equation}

% If we consider the upper bound of channel capacity, the optimal value of quantization noise variance for a given rate is $D^*=\frac{\sigma^2}{2^C-1}$. 

% So, each AP can perfectly transmit compressed symbol $i$ through the ideal fronthaul link with capacity $C$ [bps/Hz] \cite{masoumi2019performance}. 

%% -----------------------------------------------------

%% 3. Performance Analysis -----------------------------
\section{Performance Analysis} \label{Sec:Sec3}
In the following section, we present the uplink achievable data rates. Thus, after receiving the uplink data signals at APs, each access point quantizes and forwards the signal to the CPU over its allocated fronthaul link. Then, users' data recovery is performed at the CPU, and the achievable rates are derived.
%%%%%%%%%%%%%%%%%
\subsection{Quantization}
After receiving the superimposed signal $y_m$ given in (\ref{Eq:Eq7}) at   $m$th AP, by use of the rate distortion theory, the AP compresses the received signal to $\hat{y}_m$. Thus, we have
% As discussed in the previous section, each access point has to quantize the received signal in uplink and then forwards it to CPU via fronthaul link. We derive the quantized signal by using the rate distortion theory. We let  $\hat{y}_m$ denotes the quantized signal of received $y_m$ at $m$th AP which is defined as follows
\begin{eqnarray} \label{Eq:Eq10}\nonumber
    \hat{y}_m \!\!&=&\!\! y_m+n_m,\\
    \!\!&=&\!\! \sqrt{\rho_u}\sum\limits_{k=1}^{K} \sqrt{\eta_k}g_{mk}q_k+\omega_{m} + n_m,
\end{eqnarray}
where $n_m \sim \mathcal{CN}(0,D_m)$ represents the quantization noise at $m$th AP which is independent of $y_m$.
To perfectly send the quantization index over the FO fronthaul link or FSO one with capacity $C_m$, we have  $D_m=\frac{\mathbb{E}\lbrace  |y_m|^2\rbrace}{2^{C_m}-1}$. Thus, the CPU receives $\hat{y}_m$ perfectly from the $m$th AP.
% where $C_m$ denotes the fronthaul link capacity of $m$th access point. It is worth mentioning that the fronthaul link could be FO link or FSO one, as we analyse it later. 
% We assume the optimal quantization noise with variance: $D_m=\frac{\mathbb{E}\lbrace  |y_m|^2\rbrace}{2^{C_m}-1}$, where $C_m$ denotes the fronthaul link capacity of $m$th access point.
%%%%%%%%%%%%%%%%%%%%%%%%%
\subsection{User Data Detection}
The CPU receives the compressed signals given in (\ref{Eq:Eq10}) from all the APs, and applies the MRC technique to recover the data symbols of all the users. To recover the information symbol of $k$th UE, i.e., $q_k$, we have
% After quantizing the access links received signals, each access point can forward the quantized signal to CPU via fronthaul link. The fronthaul link can be either FSO or optical fiber, depending on the optimal network design. By using (\ref{Eq:Eq7}) and (\ref{Eq:Eq10}), we have
% \begin{equation} \label{Eq:Eq11}
% \hat{y}_m = \sqrt{\rho_u}\sum\limits_{k=1}^{K} \sqrt{\eta_k}g_{mk}q_k+\omega_{m} + n_m,
% \end{equation}
% To detect the transmitted symbol of $k$th UE, i.e., $q_k$, after receiving the quantized signal from APs, CPU uses MRC technique as follows
\begin{align}\label{Eq:Eq12} \nonumber
r_k &= \sum\limits_{m=1}^{M}\hat{y}_m g_{mk}^*\\ \nonumber
 &=\sum\limits_{m=1}^{M}\Big(\sqrt{\rho_u}\sum\limits_{k^\prime=1}^{K} \sqrt{\eta_{k^\prime}}g_{m{k^\prime}}q_{k^\prime}+\omega_{m} + n_m  \Big) g_{mk}^*\\ \nonumber
 &=\sqrt{\rho_u}\sum\limits_{m=1}^{M}\sum\limits_{k^\prime=1}^{K} \sqrt{\eta_{k^\prime}}g_{m{k^\prime}} g_{mk}^* q_{k^\prime} + \sum\limits_{m=1}^{M}\big(\omega_{m} + n_m \big)g_{mk}^*\\ \nonumber
 &= \underbrace{\sqrt{\rho_u}\sum\limits_{m=1}^{M}\sqrt{\eta_{k}}|{g_{mk}}|^2 q_{k}}_{A_k}\\ 
 &~+ \underbrace{\sqrt{\rho_u}\sum\limits_{m=1}^{M}\sum\limits_{\substack{k^\prime=1 \\ k^\prime\neq k}}^{K} \sqrt{\eta_{k^\prime}}g_{m{k^\prime}}  g_{mk}^* q_{k^\prime} + \sum\limits_{m=1}^{M}\big(\omega_{m} + n_m \big) g_{mk}^*}_{B_k},
\end{align}
where $A_k$ represents the desired signal, and $B_k$ denotes the sum of multi-user interference and noise.

%%%%%%%%%%%%%%%%%%%%%%%%%%%%
\subsection{Uplink Achievable Data Rates}
By applying the well-known use-and-then-forget (UatF) technique \cite{marzetta2016fundamentals,masoumi2019performance}, we can relax the availability of the full instantaneous CSI at the users. Instead, it is sufficient to know only the statistical average of the effective channel. Thus, (\ref{Eq:Eq12}) is rewritten as \cite{agheli2020performance}
\begin{equation} \label{Eq:Eq13}
r_k = \text{DS}_k\cdot q_k+\underbrace{\text{BU}_k \cdot q_k + \sum\limits_{\substack{k^\prime=1 \\ k^\prime\neq k}}^{K} \text{I}_{k k^\prime}\cdot q_{k^\prime} + \upsilon_k}_{\text{effective noise}},
\end{equation}
where
\begin{itemize}
    \item $\text{DS}_k$ is the desired signal of the $k$th user
    \begin{equation} \label{Eq:Eq14}
        \text{DS}_k=\sqrt{\rho_u}\:\mathbb{E}\bigg\lbrace  \sum\limits_{m=1}^{M}\sqrt{\eta_{k}}|g_{mk}|^2 \bigg\rbrace  ,
    \end{equation}
    \item $\text{BU}_k$ denotes the beamforming uncertainty of the $k$th user due to the statistical knowledge of CSI
    \begin{equation} \label{Eq:Eq15}
        \text{BU}_k=\sqrt{\rho_u}\:\Bigg(\sum\limits_{m=1}^{M}\sqrt{\eta_{k}}|g_{mk}|^2 - \mathbb{E}\bigg\lbrace  \sum\limits_{m=1}^{M}\sqrt{\eta_{k}}|g_{mk}|^2\bigg\rbrace  \Bigg),
    \end{equation}
    \item $\text{I}_{k k^\prime}$ represents the inter-user interference from $k^\prime$th users
    \begin{equation} \label{Eq:Eq16}
        \text{I}_{k k^\prime}=\sqrt{\rho_u}\sum\limits_{m=1}^{M} \sqrt{\eta_{k^\prime}}g_{m{k^\prime}} g_{mk}^*,
    \end{equation}
    \item $\upsilon_k$ is the composition of both the additive Gaussian and the quantization noises
     \begin{equation} \label{Eq:Eq17}
        \upsilon_k= \sum\limits_{m=1}^{M}\big(\omega_{m} + n_m \big)g_{mk}^*.
     \end{equation}
\end{itemize}
It can be shown that all terms given in  (\ref{Eq:Eq13}) are mutually uncorrelated. From the information theoretic point of view to analyse the worst-case scenario, we assume that all the terms, except the desired term, are modeled by an equivalent Gaussian random variables with the same variances. Thus, the uplink achievable rate $[\text{bps/Hz}]$ of the $k$th user is presented as
\begin{equation} \label{Eq:Eq18}
R_{u,k}=\log_2\big(1+\gamma_k\big),
\end{equation}
where $\gamma_k$ represents SINR of the $k$th user, which is given by
\begin{equation} \label{Eq:Eq19}
\gamma_k = \dfrac{|\text{DS}_k|^2}{\mathbb{E}\big\lbrace  |\text{BU}_k|^2\big\rbrace  +\sum\limits_{\substack{k^\prime=1 \\ k^\prime\neq k}}^{K} \mathbb{E}\big\lbrace  |\text{I}_{k k^\prime}|^2\big\rbrace   +\mathbb{E}\big\lbrace  |\upsilon_k|^2\big\rbrace}.
\end{equation}
In the following Theorem, the SINR of the $k$th user is computed.

\begin{theorem} The $k$th user has following SINR 
\begin{equation} \label{Eq:Eq20}
\gamma_k = \dfrac{\rho_u \eta_k \Big(\sum\limits_{m=1}^{M} \beta_{mk}\Big)^2}{ \sum\limits_{m=1}^{M} \!\left(\!\rho_u \sum\limits_{k^\prime=1}^{K} \eta_{k^\prime} \beta_{m{k^\prime}}  + \delta^2_{m}+ D_m \!\right)\!  \beta_{mk}}.
\end{equation}
\end{theorem}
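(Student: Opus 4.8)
The plan is to compute each of the four expectations appearing in the SINR expression \eqref{Eq:Eq19} separately, using only the statistical independence of the small-scale fadings $h_{mk}\sim\mathcal{CN}(0,1)$ and the definition $g_{mk}=\sqrt{\beta_{mk}}\,h_{mk}$. First I would evaluate $\mathrm{DS}_k$: since $\mathbb{E}\{|h_{mk}|^2\}=1$, we get $\mathrm{DS}_k = \sqrt{\rho_u\eta_k}\sum_{m=1}^M \beta_{mk}$, so $|\mathrm{DS}_k|^2 = \rho_u\eta_k\big(\sum_m\beta_{mk}\big)^2$, which is exactly the claimed numerator. Next, for the beamforming-uncertainty term, $\mathbb{E}\{|\mathrm{BU}_k|^2\}$ is the variance of $\sqrt{\rho_u\eta_k}\sum_m|g_{mk}|^2$; because the $g_{mk}$ across $m$ are independent, this variance is $\rho_u\eta_k\sum_m \beta_{mk}^2\,\mathrm{Var}\{|h_{mk}|^2\}$, and since $|h_{mk}|^2$ is exponential with mean $1$ its variance is $1$, giving $\rho_u\eta_k\sum_m\beta_{mk}^2$.

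For the interference terms $\mathrm{I}_{kk'}$ with $k'\neq k$, I would compute $\mathbb{E}\{|\mathrm{I}_{kk'}|^2\} = \rho_u\eta_{k'}\,\mathbb{E}\big\{\big|\sum_m g_{mk'}g_{mk}^*\big|^2\big\}$; cross terms in $m\neq m'$ vanish because $h_{mk'}$, $h_{mk}$, $h_{m'k'}$, $h_{m'k}$ are mutually independent and zero-mean, leaving $\rho_u\eta_{k'}\sum_m \beta_{mk'}\beta_{mk}\,\mathbb{E}\{|h_{mk'}|^2\}\mathbb{E}\{|h_{mk}|^2\} = \rho_u\eta_{k'}\sum_m\beta_{mk'}\beta_{mk}$. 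For the noise term, since $\omega_m$, $n_m$, and $g_{mk}$ are all mutually independent with $\omega_m\sim\mathcal{CN}(0,\delta_m^2)$ and $n_m\sim\mathcal{CN}(0,D_m)$, we get $\mathbb{E}\{|\upsilon_k|^2\} = \sum_m(\delta_m^2+D_m)\beta_{mk}$.

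The remaining step is to assemble these: the denominator of \eqref{Eq:Eq19} becomes $\rho_u\eta_k\sum_m\beta_{mk}^2 + \rho_u\sum_{k'\neq k}\eta_{k'}\sum_m\beta_{mk'}\beta_{mk} + \sum_m(\delta_m^2+D_m)\beta_{mk}$. I would then fold the $k'=k$ term of the double sum $\rho_u\sum_{k'}\eta_{k'}\sum_m\beta_{mk'}\beta_{mk}$ together with the beamforming-uncertainty term $\rho_u\eta_k\sum_m\beta_{mk}^2$ — they are identical — so that the interference and beamforming-uncertainty contributions combine into $\rho_u\sum_m\big(\sum_{k'=1}^K\eta_{k'}\beta_{mk'}\big)\beta_{mk}$, which yields exactly \eqref{Eq:Eq20} after factoring $\beta_{mk}$ out of the common sum over $m$. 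The main subtlety — not a real obstacle but the point requiring care — is correctly tracking which cross terms vanish by independence and zero-mean-ness in the $\mathrm{BU}_k$ and $\mathrm{I}_{kk'}$ computations, and in particular noticing that $D_m$ itself depends on $\mathbb{E}\{|y_m|^2\}=\rho_u\sum_{k'}\eta_{k'}\beta_{mk'}+\delta_m^2$, so the final expression is implicit in $D_m$ unless one substitutes that value, which the statement leaves in the compact $D_m$ form.
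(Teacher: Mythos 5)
Your proposal is correct and follows essentially the same route as the paper's Appendix A: term-by-term evaluation of $\mathrm{DS}_k$, $\mathbb{E}\{|\mathrm{BU}_k|^2\}$, $\mathbb{E}\{|\mathrm{I}_{kk'}|^2\}$, and $\mathbb{E}\{|\upsilon_k|^2\}$ via independence of the fadings, followed by substitution into \eqref{Eq:Eq19}; your use of $\mathrm{Var}\{|h_{mk}|^2\}=1$ is equivalent to the paper's $\mathbb{E}\{|g_{mk}|^4\}=2\beta_{mk}^2$. Your explicit observation that the $k'=k$ term of the interference sum merges with the beamforming-uncertainty term to produce the compact denominator of \eqref{Eq:Eq20} is a step the paper leaves implicit, and is correctly handled.
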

\begin{proof} The proof is given in Appendix \ref{App:App1}.
\end{proof}
%% -----------------------------------------------------

%% 4. Fronthaul Link Allocation ------------------------
\section{Fronthaul Link Design} \label{Sec:Sec4}
As mentioned in the system model, it is assumed that each AP is connected to the CPU via a fronthaul link, which could be a fiber link or a free space optical link.  Although the fiber optic link has a higher channel capacity and smaller path loss compared to that of the FSO one, the deployment cost to develop the fiber link is much higher than that of the FSO one. The deployment cost refers to a sum of costs for optical transceivers, digging, installation, and medium maintenance under fiber technology, which is more costly in urban areas. In contrast, there is no need for digging and medium maintenance in the case of FSO technology, even though it demands more repeaters in urban areas. Thus, there is a trade-off between supporting higher data rates and lower deployment costs, which depends on the network's key values. Network providers should choose whether to design a system with the highest achievable data transmission rate or a system with a lower data rate but with a lower cost. On the other hand, it is also worth mentioning that due to the quantization noise added at the AP because of compression, it is not clear how large must be the capacity of the fronthaul link. Therefore in the following, we present a metric not only to consider the capacity of the fronthaul links but also somehow take into account the deployment cost as well to find out how many fiber and/or FSO fronthaul links must be used to maximize the energy efficiency of the system.

In the following, the energy efficiency of the system is presented, and the relevant optimization problem is also discussed, which includes the total power consumption of the network and fronthaul network's deployment costs.

% As mentioned in the system model, it is assumed to have two types of fronthaul links, i.e., free-space optic and optical fiber. Since the optical fiber was introduced as a medium with a very high channel capacity and almost small path-loss, it has been the best solution to implement fronthaul and backhaul links, but with an important disadvantage of deployment and maintenance (deployment) costs. The high deployment cost of optical fiber led researchers to introduce new technology with almost the same data and, at a lower cost, means FSO. Here, there is a trade-off between data rate and deployment cost, which depends on the key values of the network deployment. Network providers should choose whether to design a system with the highest achievable data transmission rate or a system with a bit lower data rate but at a much low cost.
% In the following, we will compute the energy efficiency to form the optimization problem by employing the models of the total power consumption of the network and fronthaul deployment cost function.
%%%%%%%%%%%%%%%%%%%%%%%%%%%%%%%%%%%%%
\subsection{Energy Efficiency}
The energy efficiency of the network is defined as 
\begin{equation}\label{Eq:Eq21}
\text{EE} = \dfrac{B_s \sum\limits_{k=1}^{K}R_{u,k}}{P_{\normalfont\text{net}}+ \Omega_{\normalfont\text{fh}}} \text{~~~[bit/Joule]},
\end{equation}
where $B_s$ is the system bandwidth and to model the fronthaul network's deployment costs,  $\Omega_{\normalfont\text{fh}} = \sum\limits_{m=1}^{M} C_m \mu_m$ is used  where $\mu_m$  $[\text{Watt/bps/Hz}]$ represents the cost coefficient of data transmission over the fronthaul link of the $m$th AP which depends on the type of fonthaul link. Also, $P_{\normalfont\text{net}}$ represents the total power consumption of the network \cite{masoumi2019performance,ngo2017total}, which is given by \cite{agheli2020performance}
\begin{equation}\label{Eq:Eq22}
P_{\normalfont\text{net}}=\sum\limits_{k=1}^{K}P_k + \sum\limits_{m=1}^{M}P_{m} + B_s \sum\limits_{m=1}^{M}{C_m}{P_{\text{fh},m}} +  \sum\limits_{m=1}^{M}{P_{0,m}}\,,
\end{equation}
where $C_m$ denotes the fronthaul link capacity of $m$th AP, and
\begin{itemize}
    \item $P_k=\rho_u\eta_k$ represents the uplink transmit power of $k$th UE,
    \item $P_{m}$ denotes the sum of the consumption power of circuit elements and the signal amplification at $m$th AP,
    \item $P_{\text{fh},m}$ $[\text{Watt/Gbps}]$ is the traffic-dependent data transmission power for $m$th fronthaul link connecting AP to the CPU,
    \item $P_{0,m}$ represents the constant traffic-independent power consumed by the fronthaul link of $m$th AP.
\end{itemize}
It is assumed that a fiber-based fronthaul link has a higher deployment cost and consumes less traffic-dependent data transmission power than the FSO one, i.e. $\mu_{m_{\normalfont\text{FSO}}}\leq \mu_{m_{\normalfont\text{OF}}}$ and $P_{fh,m_{\normalfont\text{FSO}}}\geq P_{fh,m_{\normalfont\text{OF}}}$ for $m_{\normalfont\text{FSO}}=1,2,..., M_{\normalfont\text{FSO}}$ and $m_{\normalfont\text{OF}}=M_{\normalfont\text{FSO}}+1,M_{\normalfont\text{FSO}}+2,..., M$.
%%%%%%%%%%%%%%%%%%%%%%
\subsection{Optimization Problem}
Now, we focus on maximizing the network energy efficiency subject to having a total number of $M$ APs to find out how many of the fronthaul links must be fiber or FSO, and how much larger must be the capacity of the fiber link. Thus, the optimization problem is given by
\begin{equation}\label{Eq:Eq23}
\mathcal{P}_1:\begin{cases}
 \underset{\substack{ M_{\normalfont\text{FSO}} \geq 0,\,  M_{\normalfont\text{OF}}\geq 0 ,\, N \geq 0}}{\text{max}} \text{EE}\\
 \hspace{0.8cm} \text{s.t.:}\hspace{0.4cm} M_{\normalfont\text{FSO}}+M_{\normalfont\text{OF}}\leq M.
\end{cases}
\end{equation}

By replacing  $M_{\normalfont\text{FSO}}$ with $M-M_{\normalfont\text{OF}}$, and plugging $R_{uk}$ and $P_{net}$, i.e. equations (\ref{Eq:Eq18})--(\ref{Eq:Eq20}) and (\ref{Eq:Eq22}), in $\mathcal{P}_1$, we have (\ref{Eq:Eq25}), represented at the top of the next page.
% We can make the problem more simple by replacing the value of $M_{\normalfont\text{FSO}}$ with $M-M_{\normalfont\text{OF}}$. So, we have a new optimization problem with just two optimization variables, $M_{\normalfont\text{OF}}$ and $N$. New optimization problem is derived as
% \begin{equation}\label{Eq:Eq24}
% \mathcal{P}_2: \underset{M_{\normalfont\text{OF}}\geq 0,\, N\geq 0}{\text{max}}\,\Bigg\lbrace   \text{EE} = \dfrac{B_s \sum\limits_{k=1}^{K}\text{S}_{ek}}{P_{\normalfont\text{net}}+ \Omega_{\normalfont\text{fh}}}\Bigg\rbrace  ,
% \end{equation}
% plugging (\ref{Eq:Eq18}), (\ref{Eq:Eq20}), (\ref{Eq:Eq22}), and other related parameters into (\ref{Eq:Eq24}), we have:
\begin{figure*}[!t]
\normalsize
\setcounter{mytempeqncnt2}{\value{equation}}
\setcounter{equation}{16}
\begin{equation}\label{Eq:Eq25}
\mathcal{P}_2:\underset{0\leq M_{\normalfont\text{OF}}\leq M,\, 0 \leq N}{\text{max}}\, \dfrac{B_s \sum\limits_{k=1}^{K}\log_2\!\left(\!1+\dfrac{\rho_u \eta_k \Big(\sum\limits_{m=1}^{M} \beta_{mk}\Big)^2}{\sum\limits_{m=1}^{M} \!\left(\!\rho_u \sum\limits_{k^\prime=1}^{K} \eta_{k^\prime} \beta_{m{k^\prime}}  + \delta^2_{m}+ D_m \!\right)\!  \beta_{mk}}\!\right)\!}{\sum\limits_{k=1}^{K}\rho_u\eta_k + \sum\limits_{m=1}^{M}P_{m} + B_s \sum\limits_{m=1}^{M}{C_m}{P_{\text{fh},m}} +  \sum\limits_{m=1}^{M}{P_{0,m}}+ \sum\limits_{m=1}^{M} C_m \mu_m}.
\end{equation}
\setcounter{equation}{\value{mytempeqncnt2}}
\hrulefill
\vspace*{4pt}
\end{figure*}
\setcounter{equation}{17}

Without loss of generality and for the sake of simplicity, we assume that all the access links experience the same large-scale fading, i.e. $\beta_{m,k}=\beta$. Also, to simplify the optimization problem, it is assumed that the users have equal transmit power, i.e., $\eta_k=\eta$, the FSO-based access points have the same power and cost parameters such that $\mu_{m_{\normalfont\text{FSO}}}=\mu_{\normalfont\text{FSO}}$ and $P_{fh,m_{\normalfont\text{FSO}}}=P_{\text{fh},\text{FSO}}$ and similarly for the fiber-based APs, we have  $\mu_{m_{\normalfont\text{OF}}}=\mu_{\normalfont\text{OF}}$ and $P_{fh,m_{\normalfont\text{OF}}}=P_{\text{fh},\text{OF}}$. Moreover, the variance of the additive Gaussian noise at all the access points are the same, i.e. $\delta^2_m=\delta^2$.  Therefore, the optimization problem is simplified as (\ref{Eq:Eq26}), shown at the top of the next page, where the parameters are presented in Table \ref{Table:Tab1}.
\begin{figure*}[!t]
\normalsize
\setcounter{mytempeqncnt}{\value{equation}}
\setcounter{equation}{17}
\begin{equation}\label{Eq:Eq26}
\mathcal{P}_3:\underset{0 \leq M_{\normalfont\text{OF}} \leq M,\, N\geq 0}{\text{max}}\, \dfrac{K B_s \log_2\!\left(\!1+\dfrac{L_{1}}{L_{2}+(M-M_{\normalfont\text{OF}})\alpha_{\normalfont\text{FSO}} + M_{\normalfont\text{OF}}\dfrac{\alpha_{\normalfont\text{OF}}}{2^{NC_{\normalfont\text{FSO}}}-1}}\!\right)\!}{\Gamma_{ep} + (M-M_{\normalfont\text{OF}}) \Gamma_{\normalfont\text{FSO}} + N M_{\normalfont\text{OF}}\Gamma_{\normalfont\text{OF}}}.
\end{equation}
\setcounter{equation}{\value{mytempeqncnt}}
\hrulefill
\vspace*{4pt}
\end{figure*}
\setcounter{equation}{18}

\noindent
\begin{table}[!t]
\begin{center}
\caption{Optimization  parameters.}\label{Table:Tab1}
\begin{tabular}{ c l } 
 \hline 
  \rowcolor{yellow}
 \small Parameter & \multicolumn{1}{|c|}{\small \text{Formula}} \\
 \hline
  \rule{0pt}{20pt} \small $L_{1}$ & \small $ M^2 \rho_u \eta \beta^2$ \\ 
  \rule{0pt}{20pt} \small $L_{2}$ & \small $M K \rho_u \eta \beta^2 + M \delta^2 \beta$ \\
  \rule{0pt}{20pt} \small $\alpha_{\normalfont\text{FSO}}$ & \small $\dfrac{\big(K \rho_u \eta \beta + \delta^2 \big )}{2^{C_{\normalfont\text{FSO}}}-1}\beta $\\
  \rule{0pt}{20pt} \small $\alpha_{\normalfont\text{OF}}$ & \small ${\big(K \rho_u \eta \beta + \delta^2 \big )}\beta$\\
  \rule{0pt}{20pt} \small $\Gamma_{ep}$ & \small $K \rho_u\eta + \sum\limits_{m=1}^{M}(P_{m}+P_{0,m})$\\
  \rule{0pt}{20pt} \small $\Gamma_{\normalfont\text{FSO}}$ & \small ${C_{\normalfont\text{FSO}}}\big(B_s{P_{fh,{FSO}}} + \mu_{\normalfont\text{FSO}}\big)$\\
  \rule{0pt}{20pt} \small $\Gamma_{\normalfont\text{OF}}$ & \small ${C_{\normalfont\text{FSO}}}\big(B_s{P_{fh,{OF}}} + \mu_{\normalfont\text{OF}}\big)$\\
  &\\
 \hline
\end{tabular}
\medskip
\end{center}
\end{table}

% \textcolor{blue}{One can numerically show that the optimization problem $\mathcal{P}_3$ is quasi-concave and the problem can be solved by using the available solving algorithms such as exhaustive search.}

In the following propositions, optimal solutions of the optimization problem $\mathcal{P}_3$ are presented.

\begin{proposition} For a fixed  $M_{OF}$, the optimal capacity coefficient of the fiber compared to that of the FSO is 
\begin{align}\label{Eq:Eq27} \nonumber
&N^* \simeq\\
&\dfrac{-1}{C_{\normalfont\text{FSO}}} \log_2 \!\left(\!-\Gamma_{\normalfont\text{OF}} \alpha_{\normalfont\text{OF}} M_{\normalfont\text{OF}} \lambda_1 + \dfrac{\sqrt{\lambda_1^2-4(1-\lambda_3)\log_2\big(\dfrac{\lambda_2}{L_1}\big)}}{2.885\big(1-\dfrac{1}{\lambda_3}\big)}\right)\!\!,
\end{align}
where
\begin{align*}
\lambda_1 &= 2.443 + \log_2 \big(\dfrac{\lambda_2}{L_1}\big) + \dfrac{\lambda_4 C_{\normalfont\text{FSO}}}{\Gamma_{\normalfont\text{OF}} M_{\normalfont\text{OF}}},\\
\lambda_2 &= L_2 + (M-M_{\normalfont\text{OF}})\alpha_{\normalfont\text{FSO}},\\
\vspace{0.2cm}
\lambda_3 &= \dfrac{\lambda_2}{\alpha_{\normalfont\text{OF}} M_{\normalfont\text{OF}} \ln{(2)}},\\
\lambda_4 &= \Gamma_{ep} + (M-M_{\normalfont\text{OF}})\Gamma_{\normalfont\text{FSO}}.
\end{align*}
\end{proposition}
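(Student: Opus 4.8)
\emph{Proof idea.} The plan is to freeze $M_{\text{OF}}$ and regard the objective of $\mathcal{P}_3$ as a function of $N$ alone, then locate its stationary point. Write $\text{EE}(N) = K B_s\, g(N)/h(N)$, where $g(N) = \log_2\!\big(1 + L_1/(\lambda_2 + \alpha_{\text{OF}}M_{\text{OF}}/(2^{NC_{\text{FSO}}}-1))\big)$ with $\lambda_2 = L_2 + (M-M_{\text{OF}})\alpha_{\text{FSO}}$ collecting the $N$-independent part of the SINR denominator, and $h(N) = \lambda_4 + N M_{\text{OF}}\Gamma_{\text{OF}}$ with $\lambda_4 = \Gamma_{ep} + (M-M_{\text{OF}})\Gamma_{\text{FSO}}$. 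First I would record the boundary behaviour: $g(0)=0$ because the distortion term diverges when $2^{0}-1=0$, while $h(0)>0$, so $\text{EE}(0)=0$; and $g(N)\to\log_2(1+L_1/\lambda_2)$ is bounded while $h(N)\to\infty$, so $\text{EE}(N)\to 0$ as $N\to\infty$. Since $\text{EE}$ is smooth and strictly positive on $(0,\infty)$ it attains its maximum at an interior stationary point, which must satisfy $g'(N)h(N)=g(N)h'(N)=g(N)M_{\text{OF}}\Gamma_{\text{OF}}$.

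Next I would change variables to $x = 2^{-N C_{\text{FSO}}}$, i.e. $N=-\tfrac{1}{C_{\text{FSO}}}\log_2 x$ — which is already the outer shape of the claimed $N^*$ — and invoke the high-capacity regime, where the fibre quantization noise is small: $2^{NC_{\text{FSO}}}-1\approx 2^{NC_{\text{FSO}}}$, so the distortion term becomes $\alpha_{\text{OF}}M_{\text{OF}}x$. Using $L_1\gg\lambda_2$ (valid for large $M$) I would expand the logarithm to second order; this is where the numerical constants appear, $2.885\approx 2\log_2 e$ and $2.443\approx 1+\log_2 e$ coming from a quadratic (Taylor/Pad\'e) approximation of $\log_2(1+\cdot)$. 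Substituting these expansions into the stationarity equation, writing $h$ through $x$ as $\tfrac{M_{\text{OF}}\Gamma_{\text{OF}}}{C_{\text{FSO}}}\big(\tfrac{\lambda_4 C_{\text{FSO}}}{M_{\text{OF}}\Gamma_{\text{OF}}}-\log_2 x\big)$, and dropping the leftover $O(x^3)$ and similarly sub-leading pieces (e.g. $\log_2(1+x)\approx x\log_2 e$), the condition collapses to a quadratic $A x^2 + B x + C = 0$ whose coefficients are built from $L_1$, $\lambda_2$, $\lambda_4$, $\alpha_{\text{OF}}M_{\text{OF}}$, $\Gamma_{\text{OF}}$, $C_{\text{FSO}}$, and $\log_2(\lambda_2/L_1)$, arranged so that its discriminant is exactly $\lambda_1^2-4(1-\lambda_3)\log_2(\lambda_2/L_1)$ with $\lambda_1,\lambda_3$ as defined in the statement. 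The root with the $+$ sign before the square root is the one that is positive, lies in $(0,1]$, and corresponds to a maximum of $\text{EE}$; calling it $x^*$ and inverting gives $N^*=-\tfrac{1}{C_{\text{FSO}}}\log_2 x^*$.

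The main obstacle is that the stationarity equation is transcendental — it contains $2^{NC_{\text{FSO}}}$, a term linear in $N$ from $h$, and a logarithm from $g$ simultaneously — so no exact closed form exists, which is why the statement reads ``$\simeq$''. The technical crux is therefore to pick regime-consistent approximations (small distortion $x$, $L_1\gg\lambda_2$, second-order expansion of $\log_2(1+\cdot)$, $\log_2(1+x)\approx x\log_2 e$) that genuinely reduce the equation to a quadratic while keeping the error controlled, and then to confirm the admissible root: checking $0<x^*\le 1$ and that $\text{EE}''<0$ there, so that $N^*$ is a legitimate approximate maximizer rather than a minimizer or an out-of-range artefact.
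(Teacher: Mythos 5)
Your route is essentially the paper's own: fix $M_{\text{OF}}$, invoke $2^{NC_{\text{FSO}}}\gg 1$ to drop the $-1$ and the ``$1+$'' inside the logarithm, set the derivative of the resulting ratio to zero, substitute $\chi=2^{-NC_{\text{FSO}}}$, linearize the residual $\log_2(1+\cdot)$ term, and solve the resulting quadratic $u_1\chi^2+u_2\chi+u_3=0$ before inverting to $N^*$ (your added boundary/second-order checks are a sound supplement the paper omits). One small correction of attribution: the constants $2.443\approx 1+\log_2 e$ and $2.885\approx 2\log_2 e$ do not come from a second-order expansion of $\log_2(1+\cdot)$ but from collecting the $1/\ln 2$ factors produced by the first-order linearizations and by rewriting $N$ as roughly $(1-\chi)/(C_{\text{FSO}}\ln 2)$ in the denominator term; the quadratic in $\chi$ arises from the product of these first-order pieces.
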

\begin{proof} The proof is given in Appendix \ref{App:App2}.
\end{proof}

\begin{proposition} For a fixed $N$, the optimal number of fiber-based access points is 
\begin{equation}\label{Eq:Eq28}
M_{\normalfont\text{OF}}^* \simeq  \max \bigg\lbrace  0,\dfrac{\kappa_1 \kappa_4 - \kappa_2 \kappa_3}{2 \kappa_2 \kappa_4}\bigg\rbrace  ,
\end{equation}
where $\kappa_1 = L_2 + M \alpha_{\normalfont\text{OF}}$,~~
$\kappa_2 = \alpha_{\normalfont\text{FSO}}-\dfrac{\alpha_{\normalfont\text{OF}}}{2^{NC_{\normalfont\text{FSO}}}-1}$,~~
$\kappa_3 = \Gamma_{ep} + M \Gamma_{\normalfont\text{FSO}}$, and 
$\kappa_4 = N\Gamma_{\normalfont\text{OF}} - \Gamma_{\normalfont\text{FSO}}$.
\end{proposition}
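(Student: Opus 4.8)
The plan is to treat $\text{EE}$ in $\mathcal{P}_3$ as a single-variable function of $M_{\normalfont\text{OF}}$ with $N$ held fixed (the substitution $M_{\normalfont\text{FSO}}=M-M_{\normalfont\text{OF}}$ has already been made in $\mathcal{P}_3$) and to locate its maximizer by a first-order approximation of the achievable-rate numerator followed by an elementary stationarity computation.

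First I would expose the affine structure of the two denominators in $\mathcal{P}_3$. Collecting the $M_{\normalfont\text{OF}}$-terms, the SINR denominator is affine of the form $\kappa_1-\kappa_2 M_{\normalfont\text{OF}}$ and the power-plus-deployment-cost denominator is $\kappa_3+\kappa_4 M_{\normalfont\text{OF}}$, with $\kappa_1,\dots,\kappa_4$ the constants named in the statement; here $\kappa_2=\alpha_{\normalfont\text{FSO}}-\alpha_{\normalfont\text{OF}}/(2^{NC_{\normalfont\text{FSO}}}-1)$ is the per-fiber reduction of the quantization floor (nonnegative for $N\ge 1$, strictly positive for $N>1$, since a fiber carries $N$ times the FSO capacity), and $\kappa_4=N\Gamma_{\normalfont\text{OF}}-\Gamma_{\normalfont\text{FSO}}$ is the net per-fiber change of cost-plus-power.

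The \emph{key} simplification is that, in the operating regime of interest, the fronthaul capacities are large enough that the quantization-to-signal ratio $L_1/(\kappa_1-\kappa_2 M_{\normalfont\text{OF}})$ is small; applying $\log_2(1+x)\approx x/\ln 2$ turns the objective into $\text{EE}\approx \dfrac{K B_s L_1/\ln 2}{(\kappa_1-\kappa_2 M_{\normalfont\text{OF}})(\kappa_3+\kappa_4 M_{\normalfont\text{OF}})}$, whose numerator is independent of $M_{\normalfont\text{OF}}$, so maximizing $\text{EE}$ reduces to minimizing the quadratic $\phi(M_{\normalfont\text{OF}})\eqdef(\kappa_1-\kappa_2 M_{\normalfont\text{OF}})(\kappa_3+\kappa_4 M_{\normalfont\text{OF}})$. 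Differentiating, $\phi'(M_{\normalfont\text{OF}})=-\kappa_2(\kappa_3+\kappa_4 M_{\normalfont\text{OF}})+\kappa_4(\kappa_1-\kappa_2 M_{\normalfont\text{OF}})$, so $\phi'=0$ is the single linear equation $\kappa_4(\kappa_1-\kappa_2 M_{\normalfont\text{OF}})=\kappa_2(\kappa_3+\kappa_4 M_{\normalfont\text{OF}})$, whose solution is $(\kappa_1\kappa_4-\kappa_2\kappa_3)/(2\kappa_2\kappa_4)$; projecting this stationary point onto the feasible interval $[0,M]$ and retaining its nonnegative part yields the $\max\{0,\cdot\}$ in \eqref{Eq:Eq28}.

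The main obstacle — and the reason for the ``$\simeq$'' — is making the linearization defensible: one should identify the high-capacity regime as the practically relevant one and quantify the error incurred, and then carry out the second-order/boundary bookkeeping, namely check the sign of $-\kappa_2\kappa_4$ so that the interior critical point is genuinely the minimizer of $\phi$ over $[0,M]$ (when $\kappa_2\kappa_4>0$ the quadratic $\phi$ is concave and its minimum over $[0,M]$ is attained at an endpoint rather than at the critical point, so the stated closed form is the intended design rule only when the curvature has the right sign), and to treat the upper-boundary case $M_{\normalfont\text{OF}}=M$ separately. The derivative-zero algebra itself is routine once the affine structure is in place; as a robustness check I would also redo the stationarity condition for the exact $\text{EE}$ using the sharper bound $\log_2(1+x)\approx x/((1+x)\ln 2)$, which after cancellation yields the same linear equation $\kappa_4(\kappa_1-\kappa_2 M_{\normalfont\text{OF}})=\kappa_2(\kappa_3+\kappa_4 M_{\normalfont\text{OF}})$ and hence the same $M_{\normalfont\text{OF}}^\ast$.
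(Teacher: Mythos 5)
Your route is, modulo the order of operations, the paper's own (Appendix~\ref{App:App3}): the paper differentiates the exact ratio $\Lambda=\log_2\!\big(1+L_1/(\kappa_1-\kappa_2M_{\text{OF}})\big)/(\kappa_3+\kappa_4M_{\text{OF}})$ first and only then applies $\ln(1+x)\approx x$ to the surviving logarithm --- which is precisely your ``sharper bound'' $\log_2(1+x)\approx x/((1+x)\ln 2)$ variant --- and both collapse to the same linear stationarity condition $\kappa_4(\kappa_1-\kappa_2M_{\text{OF}})=\kappa_2(\kappa_3+\kappa_4M_{\text{OF}})$ and hence to \eqref{Eq:Eq28}, with the truncation at zero handled by the paper's closing remark that $\kappa_1\kappa_4<\kappa_2\kappa_3$ forces $M_{\text{OF}}^*=0$. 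Two of your side comments deserve emphasis. First, your second-order check is a genuine addition rather than a restatement: the paper never verifies that the stationary point is a maximizer, and under the linearization $\phi(M_{\text{OF}})=(\kappa_1-\kappa_2M_{\text{OF}})(\kappa_3+\kappa_4M_{\text{OF}})$ has leading coefficient $-\kappa_2\kappa_4<0$ in the typical regime $N>1$, $N\Gamma_{\text{OF}}>\Gamma_{\text{FSO}}$, so the critical point maximizes $\phi$ and thus \emph{minimizes} the approximated energy efficiency; the paper certifies the interior maximum only through the numerics of Fig.~\ref{Fig:Fig.3}, and your flag that the closed form is a design rule only when the curvature has the right sign is a legitimate criticism of the published argument, not a gap in yours. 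Second, your stated justification for the linearization is mislabelled: $L_1/(\kappa_1-\kappa_2M_{\text{OF}})$ is the per-user SINR $\gamma_k$ itself (not a quantization-to-signal ratio), and in the simulated regime ($M=100$, $K=10$, per-user rates above $2$ bps/Hz) it is not small --- but the paper's own step (a) makes the identical $x\ll1$ approximation on $\gamma_k/(1+\gamma_k)$, so this weakness is shared with, not introduced by, your proof.
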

\begin{proof} The proof is given in Appendix \ref{App:App3}.
\end{proof}

As given in (\ref{Eq:Eq28}), the optimal $M_{\normalfont\text{OF}}^*$ is proportional to the number of total access points $M$. Hence, by increasing the number of APs, $M_{\normalfont\text{OF}}^*$ increases since $\kappa_1$ becomes larger faster than $\kappa_3$. Furthermore, by increasing the power and cost parameters of OF-based APs,  $M_{\normalfont\text{OF}}^*$ decreases. In contrast, by increasing the power and cost parameters of FSO-based APs,  $M_{\normalfont\text{OF}}^*$ increases. Besides, increasing the circuit power and fronthaul constant power consumption of the APs makes $M_{\normalfont\text{OF}}^*$ smaller.
% The proposed expression (\ref{Eq:Eq28}) shows that $M_{\normalfont\text{OF}}^*$ is proportional to the number of total access points, $M$. In this case, by increasing the value of $M$, the increasing speed of $\kappa_1$ is more than $\kappa_3$, which lets $M_{\normalfont\text{OF}}^*$ increase. Furthermore, by increasing the power and cost parameters of OF-based access points, the value of $M_{\normalfont\text{OF}}^*$ decreases. In contrast, by increasing the power and cost parameters of FSO-based access points, the value of $M_{\normalfont\text{OF}}^*$ increases. Besides, increasing the value of circuit power and fronthaul constant power consumption of access points makes $M_{\normalfont\text{OF}}^*$ decrease.
%% -----------------------------------------------------

%% 5. Results and Discussions --------------------------
\noindent
\begin{table*}[!t]
\begin{center}
\caption{Network parameters for numerical results.}\label{Table:Tab2}
\subfloat{
\hspace{-0.5cm}
\begin{tabular}{ | l | c | l || l | c | l |}
\hline
 \rowcolor{yellow}
\multicolumn{1}{|c|}{\small \text{Parameter}} & \small \text{Symbol} & \multicolumn{1}{c||}{\small \text{Value}} & \multicolumn{1}{c|}{\small \text{Parameter}} & \small \text{Symbol} & \multicolumn{1}{c|}{\small \text{Value}}\\
\hline
\small Access radio frequency & \small $f$ & \small $1.9$\, $[\text{GHz}]$ & \small Circuit power at $m$th AP & \small $P_m$ & \small $0.2$\, $[\text{Watt}]$\\
\hline
\small System bandwidth & \small $B_s$ & \small $20$\, $[\text{MHz}]$ & \small Fronthaul's constant power of  $m$th AP & \small $P_{0,m}$ & \small $0.825$\, $[\text{Watt}]$\\
\hline
\small Antenna height of AP & \small $h_{\text{AP}}$ & \small $15$\, $[\text{m}]$ & \small FSO fronthaul's traffic-dependent power & \small $P_{\text{fh},\text{FSO}}$ &\small $0.3$\, $[\text{Watt/Gbps}]$\\
\hline
\small Antenna height of UE & \small $h_{\text{UE}}$ & \small $1.65$\, $[\text{m}]$  & \small OF fronthaul's traffic-dependent power & \small $P_{\text{fh},\text{OF}}$ &\small $0.25$\, $[\text{Watt/Gbps}]$\\
\hline
\small Path-loss model minimum distance & \small $d_0$ & \small $10$\, $[\text{m}]$ & \small FSO fronthaul's cost coefficient & \small $\mu_{\normalfont\text{FSO}}$ &\small $0.003$\, $[\text{Watt/bps/Hz}]$\\
\hline
\small Path-loss model maximum distance & \small $d_1$ & \small $50$\, $[\text{m}]$ & \small OF fronthaul's cost coefficient & \small $\mu_{\normalfont\text{OF}}$ &\small $0.03$\, $[\text{Watt/bps/Hz}]$\\
\hline
\small Shadowing standard deviation & \small $\sigma_{\text{sh}}$ & \small $8$\, $[\text{dB}]$ & \small Noise power at each AP & \small $\delta^2$ & \small $k_B\!\cdot\! T_0\!\cdot\! B_s\!\cdot\! NF$\\
\hline
\small Shadowing correlation coefficient & \small $\vartheta$ & \small $0.5$ & \small Boltzmann constant & \small $k_B$ & \small $1.381\!\times\! 10^{-23}$\, $[\text{dB}]$\\
\hline
\small  UE's maximum transmission power & \small $\rho_u$ & \small $100$\, $[\text{mWatt}]$ & \small Noise temperature & \small $T_0$ & \small $290$\, $[\text{Kelvin}]$\\
\hline
\small UE's power control coefficient & \small $\eta$ & \small $0.5$ & \small Noise figure & \small $NF$ & \small $9$\, $[\text{dB}]$\\
\hline
\small  \small Capacity of FSO-based fronthaul & \small $C_{\normalfont\text{FSO}}$ & \small $2\,[\text{bps/Hz}]$ &  & & \\
\hline
\end{tabular}}
\medskip
\end{center}
\end{table*}

\section{Numerical Results and Discussions} \label{Sec:Sec5}
In this section, we present numerical results to highlight the performance of the proposed network architecture and analyse the effects of the fronthaul links on the energy and spectral efficiency of the system, numerical results are presented and discussed. Conventionally, it is assumed that $M\!=\!100$ APs and $K\!=\!10$ UEs are uniformly and randomly distributed within area of $D\!=\!1\times1$ $[\text{km}^2]$. Besides, large-scale fading is modeled by a combination of both path-loss and shadowing. Thus, it is presented by \cite{ngo2017cell}
\begin{equation} \label{Eq:Eq2}
{\beta}_{mk}=10^{\frac{{\text{PL}}_{mk}}{10}}10^{\frac{{\sigma}_{\text{sh}}z_{mk}}{10}},
\end{equation}
where PL$_{mk}\, [\text{dB}]$ represents the path-loss and $10^{\frac{{\sigma}_{\text{sh}}z_{mk}}{10}}$ denotes the shadowing with standard deviation ${\sigma}_{\text{sh}}$ while $z_{mk}$ is the shadowing correlation factor.

\subsubsection{Path-Loss Model}We employ the conventional three-slope propagation model for path-loss which is defined as \cite{ngo2017cell,masoumi2019performance}
\begin{equation} \label{Eq:Eq3}
\text{PL}_{mk}= \!\begin{cases}
-L-35 {\log}_{10}(d_{mk}) ,& d_{mk}\!>\!d_1\\
-L-15 {\log}_{10}(d_{1}) -20 {\log}_{10}(d_{mk}) ,\!\!\!\!& d_0\!<\!d_{mk}\!\leq\! d_1\\
-L-15 {\log}_{10}(d_{1}) -20 {\log}_{10}(d_{0}) ,& d_{mk} \!\leq\! d_0
\end{cases}
\end{equation}
where $d_0$ and $d_1$ are two distance measuring references, $d_{mk}$ represents the distance between $m$th AP and $k$th UE, and
\begin{align} \label{Eq:Eq4} \nonumber
&\!\!\!L  = 46.3+33.9 \, {\log}_{10}(f)-13.82\,     {\log}_{10}(h_{\text{AP}})\\
&\!- (1.1\, {\log}_{10}(f)-0.7)h_{\text{UE}}+(1.56\, {\log}_{10}(f)-0.8)\: [\text{dB}],
\end{align}
where $f\, [\text{MHz}]$ denotes access radio frequency, $h_{\text{AP}}\, [\text{m}]$ and $h_{\text{UE}}\, [\text{m}]$ are heights of AP and UE, respectively.

\subsubsection{Shadowing Model}
Because of the short distance between APs and UEs in the cell-free network, there may exist the same obstacles around the APs and the UEs, which affects the shadowing characteristics. Therefore, the correlated shadowing model is suggested in \cite{ngo2017cell}, and the shadowing correlation represented by two parameters \cite{ngo2017cell}
%We use the correlated shadowing model as presented in \cite{ngo2017cell}. Because of the short distance between APs and UEs in the cell-free network, there exists the same obstacles around the APs and the UEs, which affects the shadowing characteristics. Thus, the shadowing correlation represented by two parameters \cite{wang2008joint}
\begin{equation} \label{Eq:Eq5}
z_{mk}=\sqrt{\vartheta}a_m + \sqrt{1-\vartheta}b_k,
\end{equation}
where $\vartheta \!\in\! [0,1]$, $a_m\!\sim\! \mathcal{N}(0,1)$ and $b_k\!\sim\! \mathcal{N}(0,1)$ are independent random variables which model contribution to the shadowing due to the obstacles near to APs and the ones close to the UEs, respectively. 
If one sets $\vartheta\!=\!0$, the shadowing caused by each UE becomes equal at all APs, and for $\vartheta\!=\!1$, the shadowing caused by each AP becomes similar for all UEs. The parameters used for the numerical results are presented in Table \ref{Table:Tab2}; otherwise, they are clearly mentioned in the paper. 

The network energy efficiency versus $M_{\normalfont\text{OF}}$ and $N$ is depicted in Fig.~\ref{Fig:Fig.2}. It is shown that, at large values of $M_{\normalfont\text{OF}}$, increasing $N$ improves energy efficiency and then continuously decreases. For instance at $\mu_{OF}\!=\!0.03$ and $\mu_{FSO}\!=\!0.003$, the energy efficiency has a global optimum at  $M_{\normalfont\text{OF}}^*\!=\!48$ and $N^*\!=\!2$. Also, for the small values of $M_{\normalfont\text{OF}}$, the same behaviour is observed with slower changes. Besides, to study the effect of deployment cost on the network's energy efficiency, we have compared different sets of cost parameters for OF- and FSO-based fronthaul links. As expected, if it is possible to reduce the cost of fiber links, it would be better to use more fibers than the FSO links. Thus, optimizing the network performance highly depends on efficiently selecting the fronthaul links.

%%======> Figure No.2 <======%%
\begin{figure}[t!]
\centering
\pstool[scale=0.55]{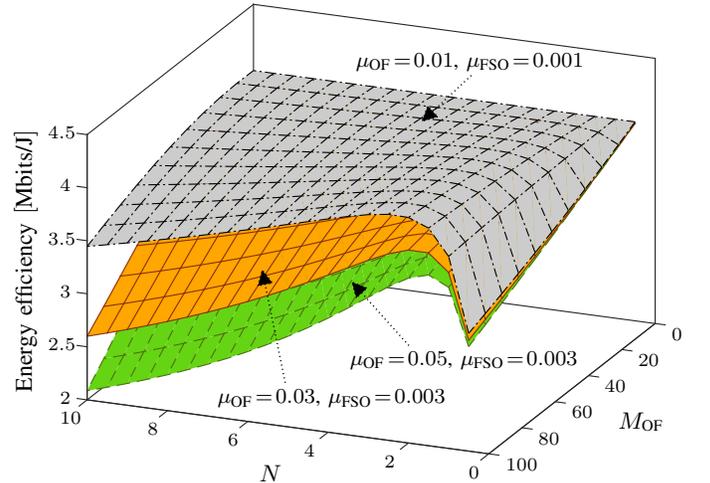}{
\psfrag{SEE}{\hspace{0.625cm}\small Energy efficiency $[\text{Mbits/J}]$}
\psfrag{M2}{\small $M_{\normalfont\text{OF}}$}
\psfrag{N}{\small $N$}
\psfrag{A}{\hspace{1cm}\footnotesize $\mu_{\normalfont\text{OF}}\!=\!0.01$, $\mu_{\normalfont\text{FSO}}\!=\!0.001$}
\psfrag{C}{\footnotesize $\mu_{\normalfont\text{OF}}\!=\!0.03$, $\mu_{\normalfont\text{FSO}}\!=\!0.003$}
\psfrag{B}{\hspace{0.5cm}\footnotesize $\mu_{\normalfont\text{OF}}\!=\!0.05$, $\mu_{\normalfont\text{FSO}}\!=\!0.003$}
\psfrag{0}{\scriptsize $0$}
\psfrag{20}{\scriptsize $20$}
\psfrag{40}{\scriptsize $40$}
\psfrag{60}{\scriptsize $60$}
\psfrag{80}{\scriptsize $80$}
\psfrag{100}{\scriptsize $100$}
\psfrag{2.5}{\hspace{-0.05cm}\scriptsize $2.5$}
\psfrag{3}{\hspace{-0.05cm}\scriptsize $3$}
\psfrag{3.5}{\hspace{-0.05cm}\scriptsize $3.5$}
\psfrag{4.5}{\hspace{-0.05cm}\scriptsize $4.5$}
\psfrag{4}{\hspace{-0.05cm}\scriptsize $4$}
\psfrag{2}{\hspace{-0.05cm}\scriptsize $2$}
\psfrag{6}{\scriptsize $6$}
\psfrag{8}{\scriptsize $8$}
\psfrag{10}{\scriptsize $10$}}
\caption{Impact of the different values of cost parameters on energy efficiency of the network for $K=10$, $M=100$, and $C_{\normalfont\text{FSO}}=2\, [\text{bps/Hz}]$.}
\label{Fig:Fig.2}
\end{figure}

The energy efficiency versus $M_{\normalfont\text{OF}}$ is represented for different values of $N$, i.e. different fiber capacity, in Fig.~\ref{Fig:Fig.3}. By increasing the value of $N$, the optimal point of $M_{\normalfont\text{OF}}$ decreases; For instance at $N \!\geq\! 8$, using FSO links for all the fronthauls is optimal, i.e.  $M_{\normalfont\text{OF}}^*\!=\!0$. Moreover, when the capacity limit of both FSO and OF links are the same, i.e., $N\!=\!1$, the same scenario occurs because the deployment cost of an FSO link is much smaller than that of an OF link. On the other hand, to maximize energy efficiency, it is optimal to use $N\!=\!2$ and $M_{OF}\!=\!48$.

%%======> Figure No.3 <======%%
\begin{figure}[t!]
\centering
\pstool[scale=0.6]{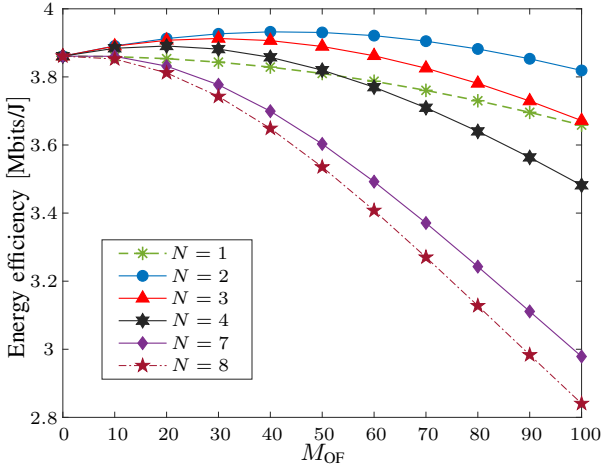}{
\psfrag{A}{\scriptsize $N = 1$}
\psfrag{B}{\scriptsize $N = 2$}
\psfrag{C}{\scriptsize $N = 3$}
\psfrag{D}{\scriptsize $N = 4$}
\psfrag{E}{\scriptsize $N = 7$}
\psfrag{F1234567}{\scriptsize $N = 8$}
\psfrag{SEE}{\hspace{-1.35cm}\small Energy efficiency $[\text{Mbits/J}]$}
\psfrag{M2}{\hspace{-0.05cm}\small $M_{\normalfont\text{OF}}$}
\psfrag{0}{\scriptsize $0$}
\psfrag{10}{\scriptsize $10$}
\psfrag{20}{\scriptsize $20$}
\psfrag{30}{\scriptsize $30$}
\psfrag{40}{\scriptsize $40$}
\psfrag{50}{\scriptsize $50$}
\psfrag{60}{\scriptsize $60$}
\psfrag{70}{\scriptsize $70$}
\psfrag{80}{\scriptsize $80$}
\psfrag{90}{\scriptsize $90$}
\psfrag{100}{\scriptsize $100$}
\psfrag{2.8}{\hspace{-0.05cm}\scriptsize $2.8$}
\psfrag{3}{\hspace{-0.05cm}\scriptsize $3$}
\psfrag{3.2}{\hspace{-0.05cm}\scriptsize $3.2$}
\psfrag{3.4}{\hspace{-0.05cm}\scriptsize $3.4$}
\psfrag{3.6}{\hspace{-0.05cm}\scriptsize $3.6$}
\psfrag{3.8}{\hspace{-0.05cm}\scriptsize $3.8$}
\psfrag{4}{\hspace{-0.05cm}\scriptsize $4$}}
\caption{Energy efficiency of the system versus $M_{\normalfont\text{OF}}$ for $K=10$, $M=100$, and $C_{\normalfont\text{FSO}}=2\, [\text{bps/Hz}]$.}
\label{Fig:Fig.3}
\end{figure}

\textcolor{black}{The cumulative  distribution functions of uplink \textit{sum}-rate and uplink \textit{per-user} rate are presented in Fig.~\ref{Fig:Fig.4}~(a) and Fig.~\ref{Fig:Fig.4}~(b), respectively. Here, we have compared the network performance for different capacity coefficients and the relevant optimal number of OF-based access points. It is shown that the network implemented based on the globally optimum values of $N$ and $M_{\text{OF}}$ provides the highest uplink rates in comparison to the other sub-optimal setups.}
% \textcolor{red}{By studying uplink per-user rate's formula as (\ref{Eq:Eq18}), it is concluded that by increasing $N$ and decreasing $M_{\normalfont\text{OF}}$ let per-user rate and, similarly, the sum-rate increase which means that $N^*=2$ and $M_{\normalfont\text{OF}}^*=48$ are not optimal in this case. Since the total network power is the function of both variable $N$ and $M_{\normalfont\text{OF}}$, the optimal values for rate and energy efficiency are not necessarily the same.}

%%======> Figure No.4 <======%%
\begin{figure}[t!]
\begin{minipage}{0.49 \textwidth}
\pstool[scale=0.6]{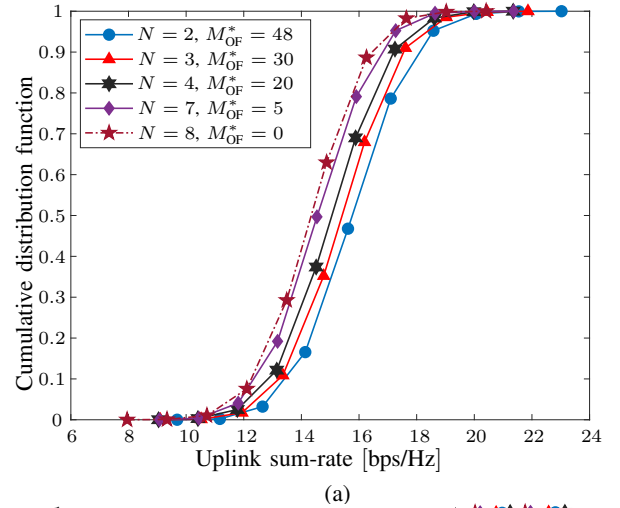}{
\psfrag{A}{\scriptsize $N = 2$, $M_{\normalfont\text{OF}}^*=48$}
\psfrag{B}{\scriptsize $N = 3$, $M_{\normalfont\text{OF}}^*=30$}
\psfrag{C}{\scriptsize $N = 4$, $M_{\normalfont\text{OF}}^*=20$}
\psfrag{D}{\scriptsize $N = 7$, $M_{\normalfont\text{OF}}^*=5$}
\psfrag{E12345678910111213141}{\scriptsize $N = 8$, $M_{\normalfont\text{OF}}^*=0$}
\psfrag{CDF}{\hspace{-2.1cm}\small Cumulative distribution function}
\psfrag{UplinkSumRate}{\hspace{-1cm}\small Uplink sum-rate $[\text{bps/Hz}]$}
\psfrag{(a)}{\hspace{0cm}\small (a)}
\psfrag{6}{\scriptsize $6$}
\psfrag{8}{\scriptsize $8$}
\psfrag{10}{\scriptsize $10$}
\psfrag{12}{\scriptsize $12$}
\psfrag{14}{\scriptsize $14$}
\psfrag{16}{\scriptsize $16$}
\psfrag{18}{\scriptsize $18$}
\psfrag{20}{\scriptsize $20$}
\psfrag{22}{\scriptsize $22$}
\psfrag{24}{\scriptsize $24$}
\psfrag{0}{\hspace{-0.05cm}\scriptsize $0$}
\psfrag{0.1}{\hspace{-0.05cm}\scriptsize $0.1$}
\psfrag{0.2}{\hspace{-0.05cm}\scriptsize $0.2$}
\psfrag{0.3}{\hspace{-0.05cm}\scriptsize $0.3$}
\psfrag{0.4}{\hspace{-0.05cm}\scriptsize $0.4$}
\psfrag{0.5}{\hspace{-0.05cm}\scriptsize $0.5$}
\psfrag{0.6}{\hspace{-0.05cm}\scriptsize $0.6$}
\psfrag{0.7}{\hspace{-0.05cm}\scriptsize $0.7$}
\psfrag{0.8}{\hspace{-0.05cm}\scriptsize $0.8$}
\psfrag{0.9}{\hspace{-0.05cm}\scriptsize $0.9$}
\psfrag{1}{\hspace{-0.05cm}\scriptsize $1$}}
\end{minipage}
\begin{minipage}{0.49 \textwidth}
\pstool[scale=0.6]{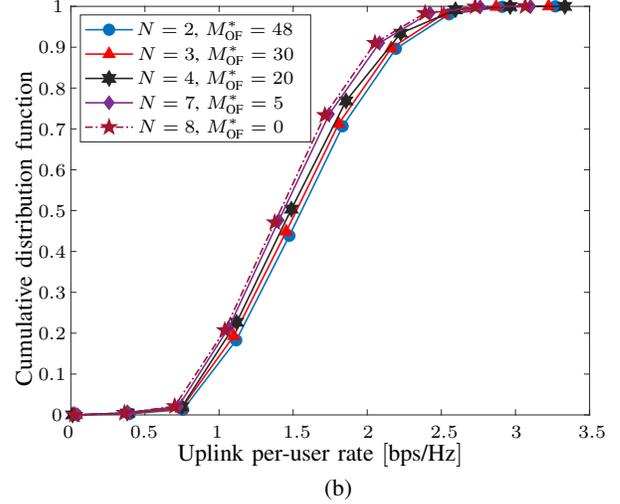}{
\psfrag{A}{\scriptsize $N = 2$, $M_{\normalfont\text{OF}}^*=48$}
\psfrag{B}{\scriptsize $N = 3$, $M_{\normalfont\text{OF}}^*=30$}
\psfrag{C}{\scriptsize $N = 4$, $M_{\normalfont\text{OF}}^*=20$}
\psfrag{D}{\scriptsize $N = 7$, $M_{\normalfont\text{OF}}^*=5$}
\psfrag{E12345678910111213141}{\scriptsize $N = 8$, $M_{\normalfont\text{OF}}^*=0$}
\psfrag{CDF}{\hspace{-2.1cm}\small Cumulative distribution function}
\psfrag{UplinkPerUserRate}{\hspace{-1.05cm}\small Uplink per-user rate $[\text{bps/Hz}]$}
\psfrag{(b)}{\hspace{0cm}\small (b)}
\psfrag{1.5}{\scriptsize $1.5$}
\psfrag{2}{\scriptsize $2$}
\psfrag{2.5}{\scriptsize $2.5$}
\psfrag{3}{\scriptsize $3$}
\psfrag{3.5}{\scriptsize $3.5$}
\psfrag{0}{\hspace{-0.05cm}\scriptsize $0$}
\psfrag{0.1}{\hspace{-0.05cm}\scriptsize $0.1$}
\psfrag{0.2}{\hspace{-0.05cm}\scriptsize $0.2$}
\psfrag{0.3}{\hspace{-0.05cm}\scriptsize $0.3$}
\psfrag{0.4}{\hspace{-0.05cm}\scriptsize $0.4$}
\psfrag{0.5}{\hspace{-0.05cm}\scriptsize $0.5$}
\psfrag{0.6}{\hspace{-0.05cm}\scriptsize $0.6$}
\psfrag{0.7}{\hspace{-0.05cm}\scriptsize $0.7$}
\psfrag{0.8}{\hspace{-0.05cm}\scriptsize $0.8$}
\psfrag{0.9}{\hspace{-0.05cm}\scriptsize $0.9$}
\psfrag{1}{\hspace{-0.05cm}\scriptsize $1$}}
\end{minipage}
\caption{CDF of (a) uplink sum-rate and (b) uplink per-user rate for $K=10$, $M=100$, and $C_{\normalfont\text{FSO}}=2\, [\text{bps/Hz}]$.}
\label{Fig:Fig.4}
\end{figure}

Fig.~\ref{Fig:Fig.5} presents the energy efficiency versus the uplink sum-rate for different values of the capacity coefficient. At a particular amount of uplink sum-rate, e.g. $22\, [\text{bps/Hz}]$, since the total power consumption of the network, given in (\ref{Eq:Eq22}), is proportional to the value of capacity coefficient, the performance degrades by increasing $N$. Moreover, to maximize the performance of the system it is not efficient to use very high-capacity fibers.
%%======> Figure No.5 <======%%
\begin{figure}[t!]
\centering
\pstool[scale=0.6]{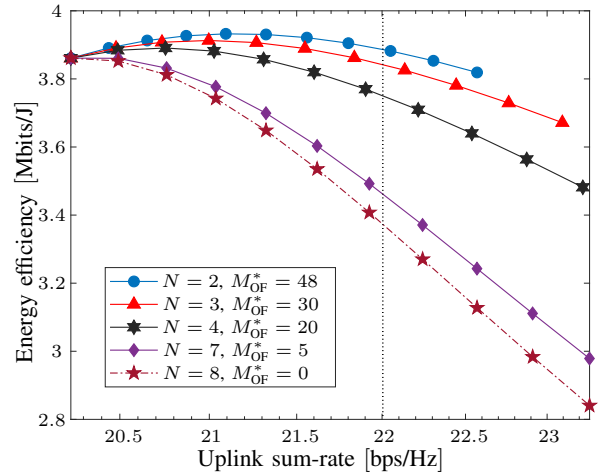}{
\psfrag{A}{\scriptsize $N = 2$, $M_{\normalfont\text{OF}}^*=48$}
\psfrag{B}{\scriptsize $N = 3$, $M_{\normalfont\text{OF}}^*=30$}
\psfrag{C}{\scriptsize $N = 4$, $M_{\normalfont\text{OF}}^*=20$}
\psfrag{D}{\scriptsize $N = 7$, $M_{\normalfont\text{OF}}^*=5$}
\psfrag{E12345678910111213141}{\scriptsize $N = 8$, $M_{\normalfont\text{OF}}^*=0$}
\psfrag{SEE}{\hspace{-1.74cm}\small Energy efficiency $[\text{Mbits/J}]$}
\psfrag{SSE}{\hspace{-1.54cm}\small Uplink sum-rate $[\text{bps/Hz}]$}
\psfrag{20.5}{\scriptsize $20.5$}
\psfrag{21}{\scriptsize $21$}
\psfrag{21.5}{\scriptsize $21.5$}
\psfrag{22}{\scriptsize $22$}
\psfrag{22.5}{\scriptsize $22.5$}
\psfrag{23}{\scriptsize $23$}
\psfrag{2.8}{\hspace{-0.05cm}\scriptsize $2.8$}
\psfrag{3}{\hspace{-0.05cm}\scriptsize $3$}
\psfrag{3.2}{\hspace{-0.05cm}\scriptsize $3.2$}
\psfrag{3.3}{\hspace{-0.05cm}\scriptsize $3.3$}
\psfrag{3.4}{\hspace{-0.05cm}\scriptsize $3.4$}
\psfrag{3.5}{\hspace{-0.05cm}\scriptsize $3.5$}
\psfrag{3.6}{\hspace{-0.05cm}\scriptsize $3.6$}
\psfrag{3.7}{\hspace{-0.05cm}\scriptsize $3.7$}
\psfrag{3.8}{\hspace{-0.05cm}\scriptsize $3.8$}
\psfrag{3.9}{\hspace{-0.05cm}\scriptsize $3.9$}
\psfrag{4}{\hspace{-0.05cm}\scriptsize $4$}}
\caption{Energy efficiency versus uplink sum-rate for $K=10$, $M=100$, and $C_{\normalfont\text{FSO}}=2\, [\text{bps/Hz}]$.}
\label{Fig:Fig.5}
\end{figure}

% Finally, comparing the network performance for different values of cost parameters for OF- and FSO-based fronthaul links presented in Fig. \ref{Fig:Fig.6}. As it assumed, the OF-based fronthaul link's cost parameter is higher than the FSO-based fronthaul link. By decreasing both cost variables, the optimal point of $M_{\normalfont\text{OF}}$ increases to $M$. In other words, in this case, the cost of fronthaul links with more capacity limit decreases where it is optimal to wholly deploy the network fronthaul with OF medium, i.e., $M_{\normalfont\text{OF}}^*=M$ for $N=2$. Moreover, it is concluded that by increasing the cost value of OF-based fronthaul links, the number of OF-based access points decreases for the $N$.
%% -----------------------------------------------------

%% 6. Conclusion ---------------------------------------
\section{Conclusion} \label{Sec:Sec6}
We studied the uplink of the cell-free massive MIMO network with capacity-limited fiber and FSO fronthaul links. We firstly represented the system and channel models. Next, data recovery and uplink data rates were analyzed. Eventually, an optimal fronthaul link design were proposed to maximize the network's energy efficiency, subject to the total number of the APs. With the fronthaul link design, we obtained the optimal number of APs with fiber fronthauling and each fiber link's desired capacity compared to an FSO link for a cost- and energy-efficient network. Subsequently, the number of APs with FSO fronthaul links were acquired. The network's energy and spectral efficiency were discussed through numerical results to clarify the need for optimal fronthaul allocations. The results show that the network's energy efficiency reaches its highest value by considering $N^*\!=\!2$ and $M_\text{OF}^*\!=\!48$ the optimal values. Even though higher fiber capacities provides faster data transfer, it increases the deployment cost. Thus, we have $ M_\text{OF}\!\leq\! M_\text{OF}^*$, for $N\!\geq\!N^*$, to ensure high energy and spectral efficiency with reasonable deployment cost.
% We studied the uplink of cell-free massive MIMO network based on utilizing fiber and FSO capacity-limited fronthaul links connecting APs to the CPU. It was assumed that the CPU has the full CSI, and the users only know the statistical mean of the channel gains. So, we could relax the requirement of CSI knowledge at the UEs. Then, the achievable uplink rates were derived, and the energy efficiency of the network was studied in the presence of considering different deployment costs and power consumption for each AP and fronthaul link. To maximize the energy efficiency of the network subject to the constraints, the optimal number of required fiber and FSO links, and the optimal capacity of each fiber link were derived analytically.  Through numerical results, the network's energy and spectral efficiency were discussed to clarify the need for optimal fronthaul allocations.
% We studied the uplink of cell-free massive MIMO network based on OF and FSO fronthaul links with different cost and power characteristics. To optimally deploy the network, we obtained achievable uplink rate and maximized energy efficiency to derive the closed-form expressions for the capacity coefficient and the number of OF-based access points. Finally, numerical results showed that there are global optimal points for both optimization variables, and performance evaluation of the network for different values of parameters presented.
%% -----------------------------------------------------

\appendices
\section{SINR Derivation}\label{App:App1}
To derive the SINR, in the following after a sequence of mathematical manipulations, we derive the variances of each term given in (\ref{Eq:Eq19}).

% The closed-form expression for SINR is computed after a sequence of mathematical contributions as follows,

\subsection{Computation of  $\text{DS}_k$} Since the channel coefficients are i.i.d., we obtain
\begin{equation} \label{Eq:Eq30}
\text{DS}_k=\sqrt{\rho_u\eta_{k}}\:\sum\limits_{m=1}^{M} \mathbb{E}\Big\lbrace  |g_{mk}|^2 \Big\rbrace  
=\sqrt{\rho_u\eta_{k}}\:\sum\limits_{m=1}^{M}\beta_{mk}.
\end{equation}

\subsection{Computation of $\mathbb{E}\big\lbrace  {|\text{BU}_k|}^2\big\rbrace$} As the variance of a sum of independent random variables is equal to the sum of the variances, we have 
\begin{align}\label{Eq:Eq31} \nonumber
\mathbb{E}\big\lbrace  {|\text{BU}_k|}^2\big\rbrace   &= \rho_u\eta_{k}\:\sum\limits_{m=1}^{M}\mathbb{E}\bigg\lbrace  \big| |g_{mk}|^2 - \mathbb{E}\Big\lbrace  |g_{mk}|^2\Big\rbrace  \big|^2\bigg\rbrace  \\ \nonumber
&= \rho_u\eta_{k}\:\sum\limits_{m=1}^{M}\bigg\lbrace   \mathbb{E}\Big\lbrace  \big||g_{mk}|^2\big|^2\Big\rbrace   - \big|\mathbb{E}\big\lbrace  |g_{mk}|^2\big\rbrace  \big|^2\bigg\rbrace  \\ \nonumber
&= \rho_u\eta_{k}\:\sum\limits_{m=1}^{M}\bigg\lbrace   \mathbb{E}\Big\lbrace  |g_{mk}|^4\Big\rbrace   - \beta_{mk}^2\bigg\rbrace  \\ \nonumber
&= \rho_u\eta_{k}\:\sum\limits_{m=1}^{M}\bigg\lbrace   2\beta_{mk}^2 - \beta_{mk}^2\bigg\rbrace  \\
&= \rho_u\eta_{k}\:\sum\limits_{m=1}^{M} \beta_{mk}^2.
\end{align}

\subsection{Computation of $\mathbb{E}\big\lbrace  |\text{I}_{k k^\prime}|^2\big\rbrace$} Since the channel coefficients are i.i.d., we have 
\begin{align} \label{Eq:Eq32} \nonumber
\mathbb{E}\big\lbrace  |\text{I}_{k k^\prime}|^2\big\rbrace  &= \rho_u\eta_{k^\prime}\sum\limits_{m=1}^{M}\mathbb{E}\Big\lbrace  \big|g_{m{k^\prime}}g_{mk}^*\big|^2\Big\rbrace \\ &=\rho_u\eta_{k^\prime}\sum\limits_{m=1}^{M}\beta_{m{k^\prime}}\beta_{mk}.
\end{align}

\subsection{Computation of $\mathbb{E}\big\lbrace  |\upsilon_k|^2\big\rbrace$} It can be shown that the additive Gaussian noise and the quantization noise are mutually uncorrelated, and both of them are uncorrelated with the channel coefficients. Thus,
\begin{equation} \label{Eq:Eq33}
\mathbb{E}\big\lbrace  |\upsilon_k|^2\big\rbrace  \!=\!\sum\limits_{m=1}^{M}\mathbb{E}\Big\lbrace  \big|\big(\omega_{m} + n_m \big)g_{mk}^*\big|^2\Big\rbrace  \!=\!\sum\limits_{m=1}^{M}\big(\delta^2_m+D_m\big)\beta_{mk}.
\end{equation}

By plugging (\ref{Eq:Eq30}), (\ref{Eq:Eq31}), (\ref{Eq:Eq32}), and (\ref{Eq:Eq33}) into (\ref{Eq:Eq19}), we obtain the SINR as given in (\ref{Eq:Eq20}).

\section{Optimal $N$}\label{App:App2}
To compute the optimal value of the capacity coefficient, we investigate the optimization problem $\mathcal{P}_3$. Without loss of generality, the constant parameter $KB_s$ is ignored. So, we have
\begin{equation}\label{Eq:Eq34}
\mathcal{P}_4:\underset{0 \leq M_{\normalfont\text{OF}} \leq M,\, N\geq 0}{\text{max}}\, \Lambda(N,M_{\normalfont\text{OF}}),
\end{equation}
where
\begin{align}\label{Eq:Eq35} \nonumber
&\Lambda(N,M_{\normalfont\text{OF}}) = \\
&\dfrac{ \log_2\!\left(\!1+\dfrac{L_{1}}{L_{2}+(M-M_{\normalfont\text{OF}})\alpha_{\normalfont\text{FSO}} + M_{\normalfont\text{OF}}\dfrac{\alpha_{\normalfont\text{OF}}}{2^{NC_{\normalfont\text{FSO}}}-1}}\!\right)\!}{\Gamma_{ep} + (M-M_{\normalfont\text{OF}}) \Gamma_{\normalfont\text{FSO}} + N M_{\normalfont\text{OF}}\Gamma_{\normalfont\text{OF}}}.
\end{align}
Since the fiber channel capacity is large enough, we have  $2^{NC_{\normalfont\text{OF}}}\gg1$ for the large values of $NC_{\normalfont\text{OF}}$. Hence, (\ref{Eq:Eq35}) reduces to
% One can show that, in (\ref{Eq:Eq35}), the argument of the logarithmic function is proportional to $2^{NC_{\normalfont\text{OF}}}-1$. Also, it is clear that $2^{NC_{\normalfont\text{OF}}}\geq 1$ for $N\geq0$ and $C_{\normalfont\text{OF}}\geq0$. So, we have $2^{NC_{\normalfont\text{OF}}}\gg1$ for the large values of $NC_{\normalfont\text{OF}}$. This assumption helps us to simplify (\ref{Eq:Eq35}) as
\begin{align}\label{Eq:Eq36}\nonumber
&\Lambda(N,M_{\normalfont\text{OF}}) = \\ \nonumber
&~~~~ \dfrac{ \log_2\left(\dfrac{L_{1}}{L_{2}+(M-M_{\normalfont\text{OF}})\alpha_{\normalfont\text{FSO}} + M_{\normalfont\text{OF}}\alpha_{\normalfont\text{OF}} 2^{-NC_{\normalfont\text{FSO}}}}\right)}{\Gamma_{ep} + (M-M_{\normalfont\text{OF}}) \Gamma_{\normalfont\text{FSO}} + N M_{\normalfont\text{OF}}\Gamma_{\normalfont\text{OF}}}\\
&= \dfrac{- \log_2\left(\dfrac{L_{2}+(M-M_{\normalfont\text{OF}})\alpha_{\normalfont\text{FSO}} + M_{\normalfont\text{OF}}\alpha_{\normalfont\text{OF}} 2^{-NC_{\normalfont\text{FSO}}}}{L_{1}}\right)}{\Gamma_{ep} + (M-M_{\normalfont\text{OF}}) \Gamma_{\normalfont\text{FSO}} + N M_{\normalfont\text{OF}}\Gamma_{\normalfont\text{OF}}}.
\end{align}

To find the optimal value of $N$ for a given $M_{\normalfont\text{OF}}$, we compute the following first derivative
\begin{align}\label{Eq:Eq37} \nonumber
&\dfrac{d}{dN}\big\lbrace  \Lambda(N,M_{\normalfont\text{OF}})\big\rbrace \overset{(a)}{=}\! -\dfrac{d}{dN}\bigg\lbrace  \dfrac{\log_2\left(\dfrac{\lambda_2 + M_{\normalfont\text{OF}}\alpha_{\normalfont\text{OF}} 2^{-NC_{\normalfont\text{FSO}}}}{L_{1}}\right)}{\lambda_4 + N M_{\normalfont\text{OF}}\Gamma_{\normalfont\text{OF}}}\bigg\rbrace\\ \nonumber
&\overset{(b)}{=} \left(\lambda_4+NM_{\normalfont\text{OF}}\Gamma_{\normalfont\text{OF}}\right)\dfrac{\alpha_{\normalfont\text{OF}} C_{\normalfont\text{FSO}} 2^{-NC_{\normalfont\text{FSO}}}}{\lambda_2 + M_{\normalfont\text{OF}}\alpha_{\normalfont\text{OF}} 2^{-NC_{\normalfont\text{FSO}}}}\\ \nonumber
&~~~+\Gamma_{\normalfont\text{OF}}\log_2(\dfrac{\lambda_2}{L_1}) + \Gamma_{\normalfont\text{OF}}\log_2\left(1+\dfrac{\alpha_{\normalfont\text{OF}}M_{\normalfont\text{OF}}}{\lambda_2} 2^{-NC_{\normalfont\text{FSO}}}\right)\\ \nonumber
&\overset{(c)}{\approx} \left(\lambda_4+NM_{\normalfont\text{OF}}\Gamma_{\normalfont\text{OF}}\right)\dfrac{\alpha_{\normalfont\text{OF}} C_{\normalfont\text{FSO}} 2^{-NC_{\normalfont\text{FSO}}}}{\lambda_2 + M_{\normalfont\text{OF}}\alpha_{\normalfont\text{OF}} 2^{-NC_{\normalfont\text{FSO}}}}\\
&~~~+\Gamma_{\normalfont\text{OF}}\log_2(\dfrac{\lambda_2}{L_1})+ \Gamma_{\normalfont\text{OF}}\dfrac{\alpha_{\normalfont\text{OF}}M_{\normalfont\text{OF}}}{\lambda_2} 2^{-NC_{\normalfont\text{FSO}}},
% \\
% &=0,
\end{align}
where (a) is due to replacing of the defined parameters given in (\ref{Eq:Eq27}) into (\ref{Eq:Eq36}), and (b) is valid because of differentiating and some simplifications. By use of $\ln(1+x)\approx x$ for $x \ll 1$, (c) holds due to $2^{NC_{OF}} \gg 1$.

% the fact that the value of the second part of the logarithmic function is small enough, we can use the proposition: $\ln(1+x)\approx x$, $\forall x \ll 1$.

Now, by defining $\chi := 2^{-NC_{\normalfont\text{FSO}}}$, by use of (\ref{Eq:Eq37}), we have
\begin{align}\label{Eq:Eq38} \nonumber
&\left(\lambda_4+\dfrac{\Gamma_{\normalfont\text{OF}}M_{\normalfont\text{OF}}}{C_{\normalfont\text{FSO}}\ln(2)}\left(1-\chi\right)\right)\dfrac{\alpha_{\normalfont\text{OF}} C_{\normalfont\text{FSO}} \chi}{\lambda_2 + M_{\normalfont\text{OF}}\alpha_{\normalfont\text{OF}} \chi}\\
&~~~+ \Gamma_{\normalfont\text{OF}}\log_2\big(\dfrac{\lambda_2}{L_1}\big)
+\Gamma_{\normalfont\text{OF}}\dfrac{\alpha_{\normalfont\text{OF}}M_{\normalfont\text{OF}}}{\lambda_2}\chi=0.
\end{align} 

After some mathematical simplifications, the $2$nd-order equation is rewritten as 
\begin{equation}\label{Eq:Eq39}
u_1\chi^2 + u_2\chi + u_3=0,
\end{equation}
where
\begin{itemize}
    \item $u_1:=\Gamma_{\normalfont\text{OF}}\alpha_{\normalfont\text{OF}}M_{\normalfont\text{OF}}\Big(\dfrac{\alpha_{\normalfont\text{OF}}M_{\normalfont\text{OF}}}{\lambda_{2}}-\dfrac{1}{\ln(2)}\Big)$,
    \item $u_2:=\Gamma_{\normalfont\text{OF}}\alpha_{\normalfont\text{OF}}M_{\normalfont\text{OF}}\Big( 2.443 + \log_2 \big(\dfrac{\lambda_2}{L_1}\big) + \dfrac{\lambda_4 C_{\normalfont\text{FSO}}}{\Gamma_{\normalfont\text{OF}} M_{\normalfont\text{OF}}}\Big)$,
    \item $u_3:=\Gamma_{\normalfont\text{OF}}\alpha_{\normalfont\text{OF}}M_{\normalfont\text{OF}}\Big(\dfrac{\lambda_{2}}{\alpha_{\normalfont\text{OF}}M_{\normalfont\text{OF}}}\log_2 \big(\dfrac{\lambda_2}{L_1}\big)\Big)$.
\end{itemize}

By use of the $2$nd-order equation's solution, i.e. $\chi=-u_2 - \frac{\sqrt{u_2^2-4u_1u_3}}{2u_1}$, and plugging the defined parameters given in (\ref{Eq:Eq39}) into the solution, (\ref{Eq:Eq27}) is obtained.

\section{Optimal $M_{OF}$}\label{App:App3}
Similar to Appendix \ref{App:App2}, the objective function is 
% The optimization solution is the same as Appendix \ref{App:App2} with the difference in optimization variable. We recall (\ref{Eq:Eq35}) as
\begin{equation}\label{Eq:Eq40}
\Lambda(N,M_{\normalfont\text{OF}}) = \dfrac{ \log_2\left(1+\dfrac{L_{1}}{\kappa_1-\kappa_2M_{\normalfont\text{OF}}}\right)}{\kappa_3 + \kappa_4 M_{\normalfont\text{OF}}},
\end{equation}
where the parameters are defined in (\ref{Eq:Eq28}).

To find the optimal value of $M_{\normalfont\text{OF}}$ for a given $N$, we drive the following first derivative 
% . So, the optimal number of OF-based access points is computed by
\begin{align}\label{Eq:Eq41} \nonumber
&\dfrac{d}{dM_{\normalfont\text{OF}}}\big\lbrace  \Lambda(N,M_{\normalfont\text{OF}})\big\rbrace  =\dfrac{d}{dM_{\normalfont\text{OF}}}\Bigg\lbrace  \dfrac{ \log_2\left(1+\dfrac{L_{1}}{\kappa_1-\kappa_2M_{\normalfont\text{OF}}}\right)}{\kappa_3 + \kappa_4 M_{\normalfont\text{OF}}}\Bigg\rbrace  \\ \nonumber
&= \left(\kappa_3+\kappa_4M_{\normalfont\text{OF}}\right)\dfrac{\kappa_2 L_1}{\left(\kappa_1-\kappa_2M_{\normalfont\text{OF}}\right)\left(\kappa_1-\kappa_2M_{\normalfont\text{OF}}+L_1\right)\ln(2)}\\ \nonumber
&~~~+\kappa_4\log_2\left(1-\dfrac{L_1}{\kappa_1-\kappa_2M_{\normalfont\text{OF}}+L_1}\right)\\ \nonumber
& \overset{(a)}{\approx} \left(\kappa_3+\kappa_4M_{\normalfont\text{OF}}\right)\dfrac{\kappa_2 L_1}{\left(\kappa_1-\kappa_2M_{\normalfont\text{OF}}\right)\left(\kappa_1-\kappa_2M_{\normalfont\text{OF}}+L_1\right)}\\ \nonumber
&~~~-\kappa_4\dfrac{L_1}{\kappa_1-\kappa_2M_{\normalfont\text{OF}}+L_1}\\
&=\dfrac{L_1}{\kappa_1-\kappa_2M_{\normalfont\text{OF}}+L_1}\left(\dfrac{\kappa_2\left(\kappa_3+\kappa_4M_{\normalfont\text{OF}}\right)}{\left(\kappa_1-\kappa_2M_{\normalfont\text{OF}}\right)}-\kappa_4\right)=0,
\end{align}
where (a) holds by use of $\ln(1+x)\approx x$ for $x \ll 1$. By solving (\ref{Eq:Eq41}), we obtain the solution as given in (\ref{Eq:Eq28}). Note that the derived optimal solution is correct for $\kappa_1\kappa_4 \geq \kappa_2\kappa_3 $. For large values of $N$, we have $ \kappa_1\kappa_4 < \kappa_2\kappa_3$ which results in $M_{\normalfont\text{OF}}^*=0$.

%% References ---------------------------------------

\balance
\bibliographystyle{IEEEtran}
\bibliography{References.bib}

%% -----------------------------------------------------

\end{document}